\newcommand{\btheta}{\boldsymbol{\theta}}
\newcommand{\Vphi}{\boldsymbol{\varphi}}
\newcommand{\bmu}{\boldsymbol{\mu}}
\newcommand{\bEta}{\boldsymbol{\eta}}
\newcommand{\p}{\vec{p}}
\spnewtheorem{result}{Result}{\bf}{\it}
\journalname{}
\begin{document}

\title{On the construction of unbiased estimators for the group testing problem}



\author{Gregory Haber         \and
        Yaakov Malinovsky
}


\institute{Gregory Haber \\
		   Yaakov Malinovsky \at
			  Department of Mathematics and Statistics, \\
			  University of Maryland, Baltimore County \\	
              1000 Hilltop Circle. \\
              Baltimore, MD 21250 \\
              \email{ghaber1@umbc.edu}
              \email{yaakovm@umbc.edu}           
}

\date{Received: date / Accepted: date}

\maketitle

\begin{abstract}
Debiased estimation has long been an area of research in the group testing literature. This has led to the development of several estimators with the goal of bias minimization and, recently, an unbiased estimator based on sequential binomial sampling. Previous research, however, has focused heavily on the simple case where no misclassification is assumed and only one trait is to be tested. In this paper, we consider the problem of unbiased estimation in these broader areas, giving constructions of such estimators for several cases. We show that, outside of the standard case addressed previously in the literature, it is impossible to find any proper unbiased estimator, that is, an estimator giving only values in the parameter space. This is shown to hold generally under any binomial or multinomial sampling plans.
\keywords{Binomial sampling plans \and Group testing \and Multinomial sampling plans \and Sequential estimation \and Unbiased estimation}
\subclass{62F10 \and 62L12}
\end{abstract}

\section{Introduction}
Group testing, which includes generally any situation in which specimens are tested in groups instead of individually, has been an ongoing area of research in the statistical literature for over 70 years. First introduced in \cite{dorfman1943} as a means of screening U.S. Army inductees for syphilis, subsequent research has led to the development of two overarching fields, case identification (as in Dorfman's original work) and estimation.

The estimation problem, which is the focus of the current work, has as its prototypical case the prevalence estimation of a single binary trait from an assumed infinite population when testing is done error free. Typically the trait of interest will be rare, so that grouping can lead to a significant reduction in the number of tests required or an increase in efficiency (in terms of mean square error) for a fixed number of trials.

While the above scenario is important theoretically, in many applications tests will be subject to misclassification error which must be accounted for when analyzing group testing data. Research in this area has been broad, covering an array of cases \citep[see, as a few examples,][]{tu1995, hung1999, liu2012, mcmahan2013, zhang2014, huang2016, li2017}.

An additional area which is becoming increasingly important for applications due to the growing development of multiplex screening tools is the estimation of prevalences for several traits simultaneously. Such assays can be modelled naturally using multinomial sampling and extensions of group testing methods to such designs can be found in \cite{ho2000, pfeiffer2002, tebbs2013, ding2015, warasi2016}.

In all cases, one of the major difficulties in carrying out estimation using group testing is that the standard maximum likelihood estimator (MLE) is biased, often quite significantly depending on the true underlying prevalence and group sizes \citep[see, for example,][]{gibbs1960, thompson1962, swallow1985}. This has led to the development of several alternative debiased estimators, that is, estimators with significantly reduced bias relative to the MLE \citep[see][]{burrows1987, tebbs2003, hepworth2009, ding2016, santos2016}. Perhaps most importantly, such estimators generally yield large reductions in the mean square error (MSE) when compared with the MLE, indicating the importance of developing such tools for group testing data. While the bias and MSE can be controlled to a degree with good design (i.e. appropriate choices of the group sizes), this usually requires the use of prior knowledge regarding the prevalence parameter or adaptive designs which may not be feasible in many cases \citep[see, for example,][]{chiang1962, hos1994, haber2017a}.

It should be noted that, if fixed binomial sampling is used, it is impossible to find any unbiased estimator for the underlying parameter when group testing is used. This fact was mentioned in \cite{hall1963} and follows from a general result concerning the estimation of the function of a binomial parameter given in \cite{lehmann1998}, p. 100. This result can be easily extended to the cases when misclassification is present and/or multiple traits are screened simultaneously, so that an unbiased estimator can not exist in such cases when fixed sampling (either binomial or multinomial, as appropriate) is used.

To consider unbiased estimation for the prevalence in the group testing problem, then, it is necessary to consider the broader class of binomial and multinomial sampling plans (defined in the following section), of which the fixed binomial and multinomial designs are members. In a recent work, \cite{haber2017b} took this approach and showed, based on results from \cite{degroot1959}, that under a certain class of inverse binomial sampling models it is possible to construct an unbiased estimator. Their work, however, was restricted to the simple case outlined above where only a single trait is to be estimated without misclassification.

In this paper, we extend the question of unbiased estimation for group testing to the above generalizations, misclassification and multiple-trait screening. In particular, we focus on the case when misclassification errors are assumed known and on the simultaneous estimation of two correlated diseases.

We show that, in both cases, unbiased estimation is possible using inverse sampling and constructions are provided under the appropriate models. It is shown, however, that these estimators are improper, that is, they lie outsize of the parameter space for some sample values. The core theoretical result of this work is to show that this will be true for any unbiased estimator under any binomial sampling plan with misclassification or any multinomial sampling plan (with at least three elements), even with perfect testing.

\section{Binomial and multinomial sampling plans}
In this section, we define the general classes of binomial and multinomial (of which binomial is a special case) sampling plans. A more detailed treatment of binomial sampling plans can be found in, among others, \cite{girshick1946, degroot1959}. Similar results for multinomial sampling plans can be found in \cite{kremers1990, koike1993}.

In general, a binomial sampling plan $\mathcal{S}$ is a set of points on the non-negative $xy$-plane determined by a set of boundary points $\mathcal{B}_\mathcal{S}$. For all plans, sampling begins at the origin and increases the $x$ or $y$ coordinate with probabilities $\theta$ and $1 - \theta$, respectively, iteratively until a point in $\mathcal{B}_\mathcal{S}$ is reached. This class is very broad, and includes both the fixed binomial and inverse binomial sampling plans, as well as many variations of bounded or fully sequential sampling designs.

The class of multinomial sampling plans is a direct generalization of the above idea. We say $\mathcal{S}_t$ is a multinomial sampling plan in $t + 1$ dimensions if $\mathcal{S}_t$ is a set of points on the non-negative orthant lying in $t + 1$ dimensional space. The plan is similarly determined by a set of boundary points $\mathcal{B}_{\mathcal{S}_t}$. Sampling begins at the origin and increases the $x_i$ coordinate, $i = 0, 1, \ldots t$, at each step with probability $\theta_i, i = 0, 1, \ldots t$ where $\boldsymbol{\theta} = (\theta_1, \theta_2, \ldots, \theta_t)$ is a multinomial parameter and $\theta_0 = 1 - \mathbf{1}^\prime \boldsymbol{\theta}$.

\section{Unbiased estimation under inverse multinomial sampling}
In this section a theorem with necessary and sufficient conditions for unbiased estimation of a function of the parameter vector of an inverse multinomial model is given. This is a generalization of Theorem 4.1 found in \cite{degroot1959}, which applies only for two class problems, and will be used in subsequent sections to construct unbiased estimators under group testing models for one and two traits. While the results presented here are applicable in many situations, for convenience we refer to testing for single or multiple diseases throughout.

Let $\bmu = (\mu_1, \ldots, \mu_t)^\prime$ with $ \mu_0 = 1 - \mathbf{1}^\prime \bmu$ and let $IMN_t(c, \bmu)$ denote the $t$-class inverse multinomial model with parameter $\bmu$ which samples until c observations from the class corresponding to $\mu_0$ are observed . Then, the random variable $\mathbf{X} \sim IMN_t(c, \bmu)$ with parameter space $\overline{\Psi} = \{\bmu: \mathbf{1}^\prime \bmu < 1, 0 \prec \bmu \prec 1 \}$ where $\prec$ denotes element-wise inequality, if
\[P(\mathbf{X} = \mathbf{x}) = \binom{c + \sum_{i=1}^t x_i - 1}{c-1, x_1, \ldots, x_t} \mu_0^c \prod_{i=1}^t \mu_i^{x_i}.\] Let $\Psi \subset \overline{\Psi}$ with $int(\Psi)$ the interior of $\Psi$. Then, we say that $X$ has an inverse multinomial distribution with restricted parameter space if $X$ has the same pdf as above but with parameter space $\Psi$. A special case of this distribution with $t=1$ is the inverse binomial, which corresponds to the classical group testing problem when screening for one disease.

\begin{theorem}
	\label{mnub}
	Let $\mathbf{X} \sim IMN_t(c, \boldsymbol{\mu})$ with restricted parameter space $\Psi$. A function $h(\boldsymbol{\mu})$ is estimable unbiasedly for all $\bmu \in int(\Psi)$ if and only if $h$ is an analytic function on a region containing an open-ball about $\mathbf{0} \in \mathbb{R}^t$ and $int(\Psi)$. The estimator is given by
	\[f(\mathbf{x}) = \frac{(c-1)!}{(c + \sum_{i=1}^t x_i - 1)!} \left.\frac{\partial^{\sum_{i=1}^t x_i }g(\boldsymbol{\mu})}{\partial \mu_1^{x_1} \cdots \partial \mu_t^{x_t}}\right|_{\boldsymbol{\mu} = \mathbf{0}},\] where $g(\boldsymbol{\mu}) = \frac{\displaystyle{h(\boldsymbol{\mu})}}{\displaystyle{\mu_0^c}}.$
\end{theorem}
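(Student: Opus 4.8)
The plan is to prove both directions by exploiting the fact that the inverse multinomial pmf, summed against a candidate estimator, produces a power series in $\bmu$. For the "if" direction, suppose $h$ is analytic on a region containing an open ball about $\mathbf{0}$ and containing $int(\Psi)$. Then $g(\bmu) = h(\bmu)/\mu_0^c$ is also analytic on such a region, since $\mu_0 = 1 - \mathbf{1}^\prime\bmu$ does not vanish there (as $\mathbf{1}^\prime\bmu < 1$ on $\overline{\Psi}$, hence on $\Psi$). Expand $g$ in its multivariate Taylor series about $\mathbf{0}$:
\[
g(\bmu) = \sum_{\mathbf{x} \geq \mathbf{0}} \frac{1}{x_1!\cdots x_t!}\left.\frac{\partial^{\sum x_i} g}{\partial\mu_1^{x_1}\cdots\partial\mu_t^{x_t}}\right|_{\bmu=\mathbf{0}} \prod_{i=1}^t \mu_i^{x_i}.
\]
Multiplying both sides by $\mu_0^c$ and comparing with $E_{\bmu}[f(\mathbf{X})] = \sum_{\mathbf{x}} f(\mathbf{x})\binom{c+\sum x_i - 1}{c-1,x_1,\ldots,x_t}\mu_0^c\prod\mu_i^{x_i}$, I read off that the stated $f(\mathbf{x})$ is exactly the coefficient match, because $\binom{c+\sum x_i-1}{c-1,x_1,\ldots,x_t} = \frac{(c+\sum x_i-1)!}{(c-1)!\,x_1!\cdots x_t!}$. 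The one genuine analytic point to check is that this term-by-term manipulation is legitimate on $int(\Psi)$: the Taylor series of $g$ converges absolutely on a neighborhood of $\mathbf{0}$, and I would argue (via analyticity of $g$ on an open set containing all of $int(\Psi)$, together with the non-negativity of the summands after inserting absolute values) that the identity of the two power series on a neighborhood of $\mathbf{0}$ forces equality of coefficients, and that the resulting series then represents $h(\bmu)$ on all of $int(\Psi)$.

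For the "only if" direction, suppose $f$ is an unbiased estimator of $h(\bmu)$ for all $\bmu \in int(\Psi)$. Then on $int(\Psi)$,
\[
h(\bmu) = \mu_0^c \sum_{\mathbf{x}\geq\mathbf{0}} f(\mathbf{x})\binom{c+\sum x_i-1}{c-1,x_1,\ldots,x_t}\prod_{i=1}^t\mu_i^{x_i}.
\]
The factor $\mu_0^c = (1-\mathbf{1}^\prime\bmu)^c$ is a polynomial, hence entire; it remains to observe that the series $\sum_{\mathbf{x}} f(\mathbf{x})\binom{c+\sum x_i-1}{c-1,x_1,\ldots,x_t}\prod\mu_i^{x_i}$, being a convergent power series representation of the function $h(\bmu)/\mu_0^c$ on the open set $int(\Psi)$, defines a function analytic on its domain of convergence, which is an open set containing both a ball about $\mathbf{0}$ (the origin lies in the closure of $\Psi$ and the coefficients are nonnegative-index, so the domain of convergence is a logarithmically convex complete Reinhardt-type region containing $\mathbf{0}$) and $int(\Psi)$. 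Multiplying back by the polynomial $\mu_0^c$ shows $h$ extends analytically to that region.

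The main obstacle, and the step I would spend the most care on, is the convergence and analytic-continuation bookkeeping in the "only if" direction: one must argue that an unbiased estimator's defining series, a priori only known to sum to the right value pointwise on $int(\Psi)$, actually has a domain of convergence containing a full neighborhood of the origin (not just the part of $\Psi$ near $\mathbf{0}$). This uses that the monomials $\prod\mu_i^{x_i}$ have all nonnegative exponents, so Abel-type summation over the multi-index lattice gives a Reinhardt domain of convergence; since that domain meets every coordinate axis near $0$ and contains points of $int(\Psi)$, it contains an open ball about $\mathbf{0}$, and the sum is automatically analytic there. I would also need to be mildly careful that $int(\Psi)$ is connected (or handle it component-wise) so that "analytic on a region containing $int(\Psi)$" is the right phrasing; this is presumably a standing assumption on the restricted parameter space inherited from the group-testing applications. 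The "if" direction is then essentially a formal coefficient-extraction once the interchange of summation is justified, and I expect it to be routine by comparison with DeGroot's two-class Theorem 4.1.
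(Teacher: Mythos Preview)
Your proposal is correct and follows essentially the same route as the paper: expand $g(\bmu)=h(\bmu)/\mu_0^c$ as a Taylor series about $\mathbf{0}$, match coefficients against the inverse-multinomial pmf to identify $f(\mathbf{x})$, and for the converse read off a power-series representation of $h/\mu_0^c$ from the unbiasedness identity. The paper's own proof is considerably terser on the convergence and analytic-continuation issues you flag (it simply asserts that the series representation makes $h$ analytic on the region $R$ and then equates coefficients), so your attention to the Reinhardt-domain structure and absolute convergence is more careful than what the paper actually supplies, but the underlying argument is the same.
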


\begin{remark}
	It should be noted that the restriction to values in $int(\Psi)$ is not absolute, and values from the boundary such as $\mu_0 = 1$ may be estimable unbiasedly, while others such as values in the plane $\mu_0 = 0$ can not. The latter point can be seen by noting that the function $g(\boldsymbol{\mu})$ above is undefined when $\mu_0 = 0$. In general, there is little interest in estimating any points on the boundary, hence the restriction to the interior is sufficient.
\end{remark}

While we will be interested here in applying this theorem only in the context of group testing, other possibilities include the unbiased estimation of  the relative risk or odds ratio of two diseases estimated simultaneously.

\section{One disease case with misclassification}
For the single disease group testing problem, we assume an infinite population of individuals whose binary status can be represented by independent random variables $\varphi \sim Ber(p)$. In what follows, $p$ is the quantity we seek to estimate. If, instead of as individuals, members of this population are tested in groups of size $k$, we have the new random variable
$\vartheta = \max\{\varphi_{1}, \ldots, \varphi_{k}\} \sim Ber(1 - q^k)$, where $q = 1 - p$.

To incorporate testing error, let $\tilde{\vartheta}$ be the true, latent value of the observed $\vartheta$. Then, we define the specificity and sensitivity, respectively, as $\pi_0 = P(\vartheta = 0|\tilde{\vartheta} = 0)$ and $\pi_1 = P(\vartheta = 1|\tilde{\vartheta} = 1)$. This yields the distribution $\vartheta \sim Ber(\theta)$ where $\theta = \pi_1 - \nu q^k$ with $\nu = \pi_1 + \pi_0 - 1$. It should be noted that this model is identifiable if and only if $\nu \neq 0$. A standard assumption to address this, which is made here as well, is that both $\pi_0$ and $\pi_1$ are greater than $0.5$. This is reasonable as it merely assumes the test performs better than random guessing.

To find an unbiased estimator using Theorem 1, we consider $Y \sim IMN_1(c, \theta)$ which is the number of positives until $c$ groups testing negative for the disease are observed. Note that the parameter space here is restricted as a function of $\theta$ when either $\pi_0 < 1$ or $\pi_1 < 1$ since, for $0 < p < 1$, $1 - \pi_0 < \theta < \pi_1$. Then, we seek an unbiased estimator of $\displaystyle{q = h(\theta) = \frac{(\pi_1 - \theta)^{1 / k}}{\nu^{1 / k}}}$, which is analytic on the interval $|\theta| < \pi_1$.

\begin{result}
	\label{res:ub1}
	For the one disease case with misclassification, with $Y \sim IMN_1(c, \pi_1 - \nu q^k)$, the unique unbiased estimator of $p$ is given by
	\[\hat{p}_{UB}(y) =1 -\left(\frac{\pi_1}{\nu}\right)^{1 / k}\sum_{i = 0}^{y} \binom{y}{i}\frac{1}{\pi_1^{y - i}}\frac{(c + i - 1)!}{(c + y - 1)!} \prod_{j = i}^{y}\frac{(y - j - 1 / k)}{(y - i -  1 / k)},y = 0, 1, 2, \ldots.\]
\end{result}

This result can be used to derive an unbiased estimator for the perfect testing case $(\pi_0 = \pi_1 = 1)$ as given in the following corollary and \cite{haber2017b}.

\begin{corollary}
	\label{cor:ub1}
	For the one disease case with no misclassification, an unbiased estimator of $p$ is given by
	\[\hat{p}_{UB}(y) = 1 - \frac{1}{\left(1 - \frac{1}{k(c + y)}\right)}\prod_{i=0}^y \left(1 - \frac{1}{k(c + i)}\right),  y = 0, 1, 2, \ldots\]
\end{corollary}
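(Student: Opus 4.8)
The plan is to obtain Corollary~\ref{cor:ub1} as the $\pi_0 = \pi_1 = 1$ specialization of the construction behind Result~\ref{res:ub1}, which is cleanest to carry out by returning to Theorem~\ref{mnub} directly rather than by simplifying the finite sum in Result~\ref{res:ub1}. With perfect testing we have $\nu = 1$ and $\theta = 1 - q^k$, so that $Y \sim IMN_1(c, \theta)$ with $\mu_0 = 1 - \theta = q^k$, and the target is $h(\theta) = q = (1 - \theta)^{1/k}$. The first step is to check the hypothesis of Theorem~\ref{mnub}: $h$ extends to an analytic function on the unit disk $|\theta| < 1$ (given there by its binomial series), and this region contains both an open ball about $\mathbf{0}$ and $int(\Psi) = (0,1)$. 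Hence an unbiased estimator of $q$ exists, is unique, and is given by the formula in Theorem~\ref{mnub}.

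Next I would compute $g(\theta) = h(\theta)/\mu_0^c = (1-\theta)^{1/k}/(1-\theta)^c = (1-\theta)^{1/k - c}$ and its derivatives at the origin. Since $\left.\frac{d^y}{d\theta^y}(1-\theta)^{\alpha}\right|_{\theta = 0} = (-1)^y \alpha(\alpha - 1)\cdots(\alpha - y + 1)$, taking $\alpha = 1/k - c$ gives $\left.\frac{\partial^y g}{\partial \theta^y}\right|_{\boldsymbol{\mu}=\mathbf{0}} = \prod_{i=0}^{y-1}\bigl(c + i - 1/k\bigr)$, the sign $(-1)^y$ being absorbed by rewriting each factor $1/k - c - i = -(c + i - 1/k)$. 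Multiplying by the prefactor $(c-1)!/(c+y-1)!$ from Theorem~\ref{mnub} yields the unbiased estimator of $q$,
\[
f(y) = \frac{(c-1)!}{(c+y-1)!}\prod_{i=0}^{y-1}\Bigl(c + i - \tfrac{1}{k}\Bigr),
\]
and then $\hat p_{UB}(y) = 1 - f(y)$.

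The remaining step --- and the only place any bookkeeping is needed --- is to verify that $1 - f(y)$ equals the displayed expression. Using $\prod_{i=0}^{y}(c+i) = (c+y)!/(c-1)!$ one rewrites $f(y) = \bigl(c+y\bigr)\prod_{i=0}^{y}\frac{c+i-1/k}{c+i}\big/\bigl(c+y-1/k\bigr)$; dividing numerator and denominator of each ratio by $k$ turns $\frac{c+i-1/k}{c+i}$ into $1 - \frac{1}{k(c+i)}$ and $\frac{c+y}{c+y-1/k}$ into $\bigl(1 - \frac{1}{k(c+y)}\bigr)^{-1}$, which is exactly the correction factor appearing in the statement --- its role is precisely to compensate for the product running to $y$ rather than to $y-1$. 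A quick sanity check at $y = 0$ (both sides give $0$) and $y = 1$ (both give $1/(kc)$) confirms the algebra. I do not anticipate any genuine obstacle: the content is entirely contained in Theorem~\ref{mnub}, and the work is the elementary differentiation of $(1-\theta)^{1/k-c}$ together with the rearrangement of a ratio of factorials into the telescoping product form; the same estimator can alternatively be recovered by collapsing the sum in Result~\ref{res:ub1} at $\pi_0 = \pi_1 = 1$, but that route is messier.
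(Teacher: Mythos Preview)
Your proposal is correct and follows essentially the same route as the paper: both bypass the sum in Result~\ref{res:ub1} and instead apply Theorem~\ref{mnub} directly to $g(\theta) = (1-\theta)^{1/k - c}$, differentiate at $\theta = 0$, and then rewrite the resulting ratio of a Pochhammer-type product to a factorial into the displayed product form. The only cosmetic difference is that the paper writes the $y$th derivative as $\tfrac{1}{c+y-\xi}\prod_{i=0}^{y}(c+i-\xi)$ from the outset, whereas you start with $\prod_{i=0}^{y-1}(c+i-1/k)$ and then insert and cancel the $i=y$ factor to match the statement; these are the same computation.
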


\subsection{Non-properness of unbiased estimator}
While the estimator given in Result \ref{res:ub1} is unbiased, for either $\pi_0 < 1$ or $\pi_1 < 1$ it is an improper estimator, that is it yields values lying outside the parameter space.

To see this, note that if $\pi_0 < 1$ then, for any $\pi_1$, we have $\hat{p}_{UB}(0) = 1 - \left(\frac{\pi_ 1}{\pi_1 + \pi_0  - 1}\right)^{1 / k} < 0$. Likewise, If $\pi_0 = 1$, then, for $\pi_1 < 1$ and $y \geq 1$,
\[\hat{p}_{UB}(y) = \frac{1}{k}\sum_{i = 0}^{y - 1} \binom{y}{i}\frac{1}{\pi_1^{y - i}}  \frac{(c + i - 1)!}{(c + y - 1)!}\prod_{j = i}^{y - 1}\frac{(y - j - 1 / k)}{(y - i - 1 / k)}.\] Now, each term of the sum in this expression is positive, so it is sufficient to show that for some $y$ at least one term is greater  than $1$, resulting in a total estimate larger than $1$. For the $i = 0$ term, we have $\displaystyle{\frac{1}{kc}\prod_{j = 1}^{y - 1} \frac{1}{\pi_1}\left(1 - \frac{c + 1 / k}{c + y - j}\right)}$ which diverges since $\pi_1 < 1$.

While these results, combined with the necessity clause of Theorem \ref{mnub}, mean that there exist no proper unbiased estimators under the inverse binomial model, in the following result we extend this idea to show that no such estimator exists under any binomial sampling plan when misclassification is present.

\begin{theorem}
	\label{thm:np1}
	Let $\mathcal{S}$ be a binomial sampling plan with set of boundary points $\mathcal{B}_\mathcal{S}$ for which, at a given step, the $x$ and $y$ coordinates are increased with probability $\theta = \pi_1 - \nu q^k$ and $1 - \theta$, respectively. Then, if $\pi_0 < 1$ or $\pi_1 < 1$, there exists no proper unbiased estimator of $p$ under $\mathcal{S}$.
\end{theorem}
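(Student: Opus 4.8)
\emph{Proof idea.} I would argue by contradiction, so suppose $\hat p$ is a proper unbiased estimator of $p$ under $\mathcal S$: a function on $\mathcal B_{\mathcal S}$ with $0\le\hat p\le 1$ and $E_\theta[\hat p]=p$ for every $\theta$ in the range $\Theta:=(1-\pi_0,\pi_1)$ that $\theta=\pi_1-\nu q^k$ traces out as $p$ runs through $(0,1)$. Let $N(a,b)$ be the number of sample paths from the origin to a boundary point $(a,b)$ that remain in $\mathcal S$, and set
\[
g(\theta)\;:=\;\sum_{(a,b)\in\mathcal B_{\mathcal S}}\hat p(a,b)\,N(a,b)\,\theta^a(1-\theta)^b ,
\]
which for $\theta\in\Theta$ equals $E_\theta[\hat p]$; by unbiasedness this is $p$, i.e. $g(\theta)=1-\bigl((\pi_1-\theta)/\nu\bigr)^{1/k}$ on $\Theta$. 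Two facts then drive the argument. First, because $0\le\hat p\le 1$ the series for $g$ is dominated term by term by $\sum_{(a,b)}N(a,b)\theta^a(1-\theta)^b=P_\theta(N<\infty)\le 1$, so $g$ is in fact defined on all of $(0,1)$ and satisfies $0\le g(\theta)\le 1$ there. Second, $g$ is real-analytic on $(0,1)$; this belongs to the classical theory of binomial sampling plans \cite{girshick1946, degroot1959}, and comes about because a closed plan has geometrically decaying sample-size tails, so that $E_\theta[\rho^{N}]<\infty$ for some $\rho>1$, and the identity $\theta^a(1-\theta)^b\rho^{a+b}=(\rho\theta)^a(\rho(1-\theta))^b$ then shows the series for $g$ converges absolutely on a genuine complex disc about each $\theta\in(0,1)$.

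Granting this, I would finish by comparing $g$ with the right-hand side $1-((\pi_1-\theta)/\nu)^{1/k}$ near the two ends of $(0,1)$. If $\pi_0<1$, then $\nu<\pi_1$, the right-hand side is real-analytic on $(0,\pi_1)\supseteq\Theta$, and the identity theorem forces $g(\theta)=1-((\pi_1-\theta)/\nu)^{1/k}$ on all of $(0,\pi_1)$; letting $\theta\to 0^+$ gives $g(\theta)\to 1-(\pi_1/\nu)^{1/k}<0$, so $g$ takes negative values, contradicting $g=E_\theta[\hat p]\ge 0$. If instead $\pi_0=1$ — which forces $\pi_1<1$, $\nu=\pi_1$, and $\Theta=(0,\pi_1)$ — then $g(\theta)=1-(1-\theta/\pi_1)^{1/k}$ on $(0,\pi_1)$, and I would split on $k$: for $k\ge 2$ the derivative of the right-hand side tends to $+\infty$ as $\theta\to\pi_1^-$, so $g'$ blows up at the interior point $\pi_1\in(0,1)$, contradicting that $g$ is differentiable there; for $k=1$ the right-hand side is the affine function $\theta/\pi_1$, so $g(\theta)=\theta/\pi_1$ on all of $(0,1)$ and $g(\theta)\to 1/\pi_1>1$ as $\theta\to 1^-$, contradicting $g=E_\theta[\hat p]\le 1$. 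Either way the assumed $\hat p$ cannot exist, which is the claim.

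The only step I expect to require care is the real-analyticity of $g$ on $(0,1)$. Bare convergence of $\sum\hat p(a,b)N(a,b)\theta^a(1-\theta)^b$ on the open interval does not, on its own, yield analyticity: the monomials $\theta^a(1-\theta)^b$ have their domain of convergence pinned against the segment $s+t=1$ in the $(s,t)=(\theta,1-\theta)$ plane, and any Taylor expansion at an interior point needs summability slightly off that segment. What supplies this is the structural fact that closed binomial sampling plans have exponentially light tails for the sample size; I would quote this from the sampling-plan literature (or reprove it) and then the complex-analytic extension of $g$ — and with it the whole argument — follows. Everything downstream of the analyticity (the identity theorem together with the one-line evaluations of $g$ and $g'$ near $0$, near $1$, and near $\pi_1$) is routine.
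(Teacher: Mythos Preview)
Your proposal is correct and reaches the conclusion, but by a genuinely different route from the paper's.

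The paper's argument is combinatorial rather than analytic. Writing the unbiasedness identity as an equality of power series in $\theta$ and matching the constant term, it deduces that $\mathcal{B}_{\mathcal{S}}$ must contain a (necessarily unique) point $(0,y^*)$ on the $y$-axis, with $K(0,y^*)=1$ and estimator value $f(0,y^*)=(\pi_1/\nu)^{1/k}$; when $\pi_0<1$ this exceeds $1$, so $\hat p(0,y^*)<0$ and the specific offending sample point is exhibited. For $\pi_0=1$, $\pi_1<1$, the paper subtracts off the $(0,y^*)$ contribution and lets $\theta\to\pi_1^{-}$: the remaining sum, whose weights $K(x,y)\theta^x(1-\theta)^y$ are strictly positive, tends to $-(1-\pi_1)^{y^*}<0$, forcing some $f(x,y)<0$. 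No case split on $k$ is needed.

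Your route instead treats $g(\theta)=E_\theta[\hat p]$ as real-analytic on all of $(0,1)$, uses the identity theorem to extend $g(\theta)=1-\bigl((\pi_1-\theta)/\nu\bigr)^{1/k}$ beyond the natural parameter range, and reads off contradictions from the boundary or singular behaviour of the right-hand side (negative limit at $0$; infinite derivative at the interior point $\pi_1$; value exceeding $1$ near $1$). This is cleaner conceptually and transports readily, at the cost of the analyticity lemma you rightly flag. The paper's version is more elementary and actually names the improper boundary point in the first case; on the other hand its coefficient-matching step also quietly presumes that the expectation admits a power-series representation about $\theta=0\notin(1-\pi_0,\pi_1)$, so the analytic input you isolate is present there too, just less visibly.
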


It should be noted that, while the explicit construction of the unbiased estimator in Result \ref{res:ub1} required the assumption that the misclassification parameters were known, the result of Theorem \ref{thm:np1} holds more generally, even when this assumption does not hold.

From the proof of Theorem \ref{thm:np1} we get the following corollary, which is also given in \cite{haber2017b}, showing that the above inverse binomial model which counts until $c$ negatives is the only one yielding an unbiased estimator of $p$.

\begin{corollary}
	\label{cor:np1}
	Let $Y\sim IMN_1(c, 1-\theta)$ where $\theta = \pi_1 - \nu q^k$, so that $Y$ is the number of negative groups drawn until $c$ positive results are observed. Then, there exists no unbiased estimator of $p$ for any values of $\pi_0$ and $\pi_1$.
\end{corollary}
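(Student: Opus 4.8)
The plan is to read this off Theorem~\ref{mnub} (indeed, as the text notes, it falls out of the proof of Theorem~\ref{thm:np1}): apply that theorem with $t=1$ and inverse--multinomial parameter $\mu = \mu_1 = 1-\theta$, so that $\mu_0 = \theta$, and exhibit a failure of the required analyticity. First I would rewrite the target $p$ as a function of $\mu$. From $\theta = \pi_1 - \nu q^k$ and $\theta = 1-\mu$ one gets $q^k = (\mu - (1-\pi_1))/\nu$, so the quantity to be estimated is
\[
h(\mu) \;=\; p \;=\; 1 - \left(\frac{\mu - (1-\pi_1)}{\nu}\right)^{1/k}.
\]
For a group size $k \ge 2$ the lone obstruction to analyticity of $h$ is the branch point at $\mu = 1-\pi_1$. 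I would also record, exactly as in the paragraph preceding Result~\ref{res:ub1}, that $\theta$ runs over $(1-\pi_0,\pi_1)$ as $p$ runs over $(0,1)$, so in the $\mu$--coordinate $int(\Psi)$ is the open interval $(1-\pi_1,\pi_0)$; and that the standing assumption $\pi_0,\pi_1 > 1/2$ gives $\nu = \pi_0 + \pi_1 - 1 > 0$, hence $0 \le 1-\pi_1 < \pi_0$.

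The decisive observation is the position of the branch point relative to $int(\Psi)$. If $\pi_1 = 1$ the branch point sits at $\mu = 0$, so $h$ is not analytic on any ball about $\mathbf{0}$ and the hypothesis of Theorem~\ref{mnub} already fails. If $\pi_1 < 1$ then $h$ is analytic at $\mathbf{0}$ but the branch point at $1-\pi_1$ coincides with the left endpoint of $int(\Psi)$ and lies strictly between $\mathbf{0}$ and every point of $int(\Psi)$; concretely, the Taylor series of $h$ about $\mathbf{0}$ --- and hence that of $g(\mu) = h(\mu)/(1-\mu)^c$, whose remaining singularity, the pole at $\mu = 1$, is farther out since $1-\pi_1 < 1$ --- has radius of convergence exactly $1-\pi_1$, so it diverges at every $\mu \in int(\Psi)$ and the series estimator supplied by Theorem~\ref{mnub} is not a valid unbiased estimator anywhere on the parameter space. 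In either case $h$ fails to be analytic on a region containing both an open ball about $\mathbf{0}$ and $int(\Psi)$, so the necessity clause of Theorem~\ref{mnub} forbids any unbiased estimator of $p$, for every admissible pair $(\pi_0,\pi_1)$.

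The step that needs care is arguing that this branch point genuinely blocks passage from a neighbourhood of $\mathbf{0}$ to the interval $(1-\pi_1,\pi_0)$ lying just beyond it, with no fortuitous cancellation enlarging the domain of analyticity --- here it is convenient that $(\mu - (1-\pi_1))^{1/k}$ is the only non-analytic ingredient of $h$. It is worth contrasting the situation with Result~\ref{res:ub1}, where the inverse plan counts \emph{negative} groups: there the relevant parameter is $\mu_1 = \theta$, the branch point of $q = h(\theta) = ((\pi_1-\theta)/\nu)^{1/k}$ lands at $\theta = \pi_1$, the \emph{right} endpoint of $int(\Psi)$ and on the side of $int(\Psi)$ away from $0$, so the expansion about $\mathbf{0}$ does cover $int(\Psi)$ and an (improper) unbiased estimator exists. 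Corollary~\ref{cor:np1} is precisely the statement that swapping the roles of positive and negative groups moves the branch point to the wrong side; this is also what the proof of Theorem~\ref{thm:np1} records when it singles out the plan of Result~\ref{res:ub1} as the only inverse--binomial design under which $p$ is estimable unbiasedly.
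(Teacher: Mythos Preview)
Your argument via Theorem~\ref{mnub} is valid, but the paper takes a much shorter combinatorial route drawn directly from the proof of Theorem~\ref{thm:np1} (which is what your opening sentence announces but not what you actually do). Equation~(\ref{eq:pfnp1_1}) there shows that any binomial plan admitting an unbiased estimator of $q$ must have a boundary point $(0,y^*)$ on the $y$-axis, since matching the $\theta^0$ coefficient forces $f(0,y^*)=h(0)=(\pi_1/\nu)^{1/k}\neq 0$. The plan of Corollary~\ref{cor:np1} stops only on the line $x=c$ and so has no such point; that finishes the proof for every $\pi_0,\pi_1$ without any discussion of singularities. Your route, by contrast, reparametrizes in $\mu=1-\theta$ and tracks the branch point of $h$; it has the merit of making transparent \emph{why} the two inverse designs differ (the singularity lands on the near side of $int(\Psi)$ here, on the far side in Result~\ref{res:ub1}). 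One caution: for $\pi_1<1$ and even $k$, $h$ is not real-valued near $\mu=0$, and with a well-chosen branch cut $h$ \emph{is} analytic on a connected complex region containing both $0$ and $int(\Psi)$; the operative obstruction is the one you state concretely --- that the Taylor series of $g$ about $0$ has radius $1-\pi_1$ and hence cannot converge on $int(\Psi)$ --- and that is what the necessity proof of Theorem~\ref{mnub} actually uses.
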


\section{Two disease case with no misclassification}
For the case of two diseases, let $\varphi_{1}$ and $\varphi_{2}$ be marginally binomial random variables with parameters $0 < p_1 < 1$ and $0 < p_2 < 1$ respectively. Then, $(\varphi_{1}, \varphi_{2})$ has a one-to-one correspondence to the vector $\Vphi = (\varphi_{00}, \varphi_{10}, \varphi_{01}, \varphi_{11})$ with joint multinomial distribution $\Vphi \sim MN_3(1, \mathbf{p})$ and sample space $\boldsymbol{\Psi}_{\mathbf{p}} = \{\mathbf{p}: \mathbf{1}^\prime \mathbf{p} < 1, 0 \prec \mathbf{p} \prec 1\}$, where $\mathbf{p} = (p_{10}, p_{01}, p_{11})$ and $p_{00} = 1 - \mathbf{1}^\prime \mathbf{p}$. Note that the marginal parameters can be expressed as $p_1 = p_{10} + p_{11}$ and $p_2 = p_{01} + p_{11}$.

If we assume no misclassification, we have the $i$th grouped sample $(\vartheta^{(k)}_{1i}, \vartheta^{(k)}_{2i}) =\\ (\max\{\varphi_{1i_1}, \cdots, \varphi_{1i_k}\}, \max\{\varphi_{2i_1}, \cdots, \varphi_{2i_k}\})$ which corresponds to
\[\boldsymbol{\vartheta}_i^{(k)} = (\vartheta_{00}, \vartheta_{10}, \vartheta_{01}, \vartheta_{11}) \sim MN_3(1, \btheta),\]
where
\begin{align}
\label{eq:theta1}
\btheta &= (\theta_{10}, \theta_{01}, \theta_{11}) \nonumber \\
&= ((p_{00} + p_{10})^k - p_{00}^k, (p_{00} + p_{01})^k - p_{00}^k,1 - (p_{00} + p_{10})^k - (p_{00} + p_{01})^k + p_{00}^k)
\end{align} and
\begin{equation}
\label{eq:theta2}
\theta_{00} = 1 - \mathbf{1}^\prime \btheta = p_{00}^k.
\end{equation}

If we sample until $c$ groups are found without either disease, and set $\mathbf{Z} = (z_{10}, z_{01}, z_{11})$ to be the sum of the observed $\boldsymbol{\vartheta}^{(k)}_i$s, we have $\mathbf{Z} \sim IMN_3(c, \btheta)$. Note that the parameter space of $\mathbf{Z}$, $\boldsymbol{\Psi}_Z = \{\btheta(\mathbf{p}): \mathbf{1}^\prime \mathbf{p} < 1, 0 \prec \mathbf{p} \prec 1\}$, is a proper subset of the full parameter space $\boldsymbol{\Psi}_{\btheta} = \{\btheta: \mathbf{1}^\prime \btheta < 1, 0 \prec \btheta \prec 1\}$. This fact will play a crucial role below in showing that there exists no proper unbiased estimator of $\mathbf{p}$.

To find an unbiased estimator, the following lemma  will be needed, which is simply the result of inverting (\ref{eq:theta1}) and (\ref{eq:theta2}).

\begin{lemma}
	\label{lm:h2}
	The unique function $h:\btheta \mapsto \mathbf{p}$ is given by
	\begin{align*}
    p_{00} &= h_{00}(\btheta) = (1 - \theta_{10} - \theta_{01} -
	\theta_{11})^{1 / k},\\
	p_{10} &= h_{10}(\btheta) = (1 - \theta_{01} - \theta_{11})^{1 / k} -
	h_{00}(\btheta),\\
	p_{01} &= h_{01}(\btheta) = (1 - \theta_{10} - \theta_{11})^{1 / k} -
	h_{00}(\btheta),\\
	p_{11} &= h_{11}(\btheta) = 1 - p_{00} - p_{10} - p_{01}.	
	\end{align*}
\end{lemma}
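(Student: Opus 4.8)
The lemma asks to show that the map $\btheta \mapsto \mathbf{p}$ defined by equations (\ref{eq:theta1}) and (\ref{eq:theta2}) is a bijection (at least on the relevant parameter region) and to produce its inverse explicitly. Since the forward map is given, the natural plan is simply to invert the system of equations algebraically and then check consistency.

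First I would start from (\ref{eq:theta2}), which immediately gives $p_{00}^k = \theta_{00} = 1 - \mathbf{1}^\prime\btheta = 1 - \theta_{10} - \theta_{01} - \theta_{11}$, hence $p_{00} = (1 - \theta_{10} - \theta_{01} - \theta_{11})^{1/k}$, taking the positive real $k$-th root (legitimate since $p_{00} \in (0,1)$). This is $h_{00}$. Next, observe from (\ref{eq:theta1}) that $\theta_{10} + \theta_{00} = (p_{00} + p_{10})^k$, so that $\theta_{10} + (1 - \theta_{10} - \theta_{01} - \theta_{11}) = 1 - \theta_{01} - \theta_{11} = (p_{00} + p_{10})^k$; taking the $k$-th root gives $p_{00} + p_{10} = (1 - \theta_{01} - \theta_{11})^{1/k}$, and subtracting $h_{00}(\btheta)$ yields $h_{10}$. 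The formula for $h_{01}$ follows symmetrically by swapping the roles of the subscripts $10$ and $01$: $p_{00} + p_{01} = (1 - \theta_{10} - \theta_{11})^{1/k}$. Finally $p_{11}$ is recovered from the constraint $p_{00} + p_{10} + p_{01} + p_{11} = 1$, giving $h_{11}$.

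It then remains to verify that this candidate inverse is well-defined on $\boldsymbol{\Psi}_Z$ (the arguments of all the $k$-th roots are positive, which holds because each equals some $(p_{00}+p_{\bullet})^k > 0$ under the forward parametrization) and that it genuinely inverts the forward map — i.e. substituting $h(\btheta)$ back into (\ref{eq:theta1})--(\ref{eq:theta2}) returns $\btheta$, and substituting $\btheta(\mathbf{p})$ into the $h$ formulas returns $\mathbf{p}$. Both checks are routine back-substitutions. Uniqueness follows because each step above was forced: given $\btheta$, the value of $p_{00}$ is determined by (\ref{eq:theta2}), then $p_{10}$ and $p_{01}$ are determined, and then $p_{11}$ is forced by the sum-to-one constraint, so no other preimage is possible.

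There is no real obstacle here; the only point requiring a word of care is the choice of branch for the $k$-th roots, which is pinned down by the requirement that the coordinates of $\mathbf{p}$ lie in $(0,1)$ (equivalently, that we are inverting on the image $\boldsymbol{\Psi}_Z$ rather than on all of $\boldsymbol{\Psi}_{\btheta}$). I would state this restriction explicitly, since it is precisely the gap between $\boldsymbol{\Psi}_Z$ and $\boldsymbol{\Psi}_{\btheta}$ that the paper has already flagged as crucial for the later non-properness argument.
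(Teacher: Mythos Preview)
Your proposal is correct and matches the paper's approach exactly: the paper does not give a separate proof of this lemma but simply states that it ``is simply the result of inverting (\ref{eq:theta1}) and (\ref{eq:theta2}),'' which is precisely the algebraic inversion you carry out. Your added remarks on the choice of branch for the $k$-th root and the restriction to $\boldsymbol{\Psi}_Z$ are a useful elaboration beyond what the paper spells out.
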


The function $h(\btheta)$ given in Lemma \ref{lm:h2} is analytic on an open region containing $\mathbf{0}\cup int(\boldsymbol{\Psi}_Z)$, so the conditions of Theorem \ref{mnub} hold and an unbiased estimator exists.
\begin{result}
	\label{res:ub2}
	The unique unbiased estimator of $\mathbf{p}$ where $\mathbf{Z} \sim IMN_3(c, \btheta)$, with $\btheta$ as in (\ref{eq:theta1}), is given by
	\begin{align*}
		\hat{p}_{00} &= \frac{1}{\left(1 - \frac{1}{k(c + z_{10} + z_{01} + z_{11})}\right)}\prod_{j=0}^{z_{10} + z_{01} + z_{11}}\left(1 - \frac{1}{k(c + j)}\right),\\
		\hat{p}_{10} &= \frac{1}{\left(1 - \frac{1}{k(c + z_{10} + z_{01} + z_{11})}\right)}\prod_{j=0}^{z_{01} + z_{11}}\left(1 - \frac{1}{k(c + z_{10} + j)}\right) - \hat{p}_{00},\\
		\hat{p}_{01} &= \frac{1}{\left(1 - \frac{1}{k(c + z_{10} + z_{01} + z_{11})}\right)}\prod_{j=0}^{z_{10} + z_{11}}\left(1 - \frac{1}{k(c + z_{01} + j)}\right) - \hat{p}_{00},\\
		\hat{p}_{11} &= 1 - \hat{p}_{00} - \hat{p}_{10} - \hat{p}_{01}.
	\end{align*}
\end{result}

The unbiased estimator given in Result \ref{res:ub2} is an improper estimator.
This can be shown by counterexample, considering the point $\mathbf{z} = (1, 1, 0)$. We have, evaluating at this point,
\begin{align*}
\hat{p}_{00} + \hat{p}_{10} + \hat{p}_{01} &= 2\left(1 - \frac{1}{k(c + 1)}\right) - \left(1 - \frac{1}{kc}\right)\left(1 - \frac{1}{k(c + 1)}\right)\\
&=1 + \frac{1}{kc} - \frac{1}{k(c+1)} -\frac{1}{k^2c(c + 1)}\\
&= 1 + \frac{1}{kc}\left(1 - \frac{(c + 1 / k)}{c + 1}\right) \\
& > 1,
\end{align*}
for any $c$ and $k > 1$.

As in Theorem \ref{thm:np1}, this property can be shown to hold for any unbiased estimator under any multinomial sampling plan.

\begin{theorem}
	\label{thm:np2}
	Let $\mathcal{S}_3$ be a multinomial sampling plan in four dimensions with boundary points $\mathcal{B}_{\mathcal{S}_3}$ such that at each step the $i$th coordinate, $i = 0, 1, 2, 3$, is increased with probability $\theta_i$ as in (\ref{eq:theta1}) and (\ref{eq:theta2}). Then, there exists no proper unbiased estimator of $\mathbf{p}$ under $\mathcal{S}_3$.
\end{theorem}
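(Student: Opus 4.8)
The plan is to mimic the argument behind Theorem \ref{thm:np1}, replacing "binomial sampling plan" with "multinomial sampling plan" and exploiting the fact, emphasized in the text, that the parameter space $\boldsymbol{\Psi}_Z$ attainable via the map $\btheta(\mathbf{p})$ is a \emph{proper} subset of the full multinomial parameter space $\boldsymbol{\Psi}_{\btheta}$. Concretely: suppose for contradiction that some proper unbiased estimator $f$ of $\mathbf{p}$ exists under the plan $\mathcal{S}_3$. Sampling under $\mathcal{S}_3$ terminates with probability one at a boundary point, and for each terminal point $\mathbf{n}=(n_0,n_1,n_2,n_3)\in\mathcal{B}_{\mathcal{S}_3}$ the probability of that path is a monomial $C(\mathbf{n})\,\theta_0^{n_0}\theta_1^{n_1}\theta_2^{n_2}\theta_3^{n_3}$ in the $\theta_i$. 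Unbiasedness of, say, the $p_{00}$-coordinate then reads $\sum_{\mathbf{n}\in\mathcal{B}_{\mathcal{S}_3}} f_{00}(\mathbf{n})\,C(\mathbf{n})\,\theta_0^{n_0}\theta_1^{n_1}\theta_2^{n_2}\theta_3^{n_3} = p_{00} = h_{00}(\btheta) = \theta_0^{1/k}$, with analogous identities for the other coordinates using Lemma \ref{lm:h2}.

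Next I would argue that this identity, valid a priori only for $\btheta\in\boldsymbol{\Psi}_Z$, must in fact extend to all of $\boldsymbol{\Psi}_{\btheta}$ — indeed to an open complex neighborhood. The left side is a power series (finite or infinite) in $\theta_0,\theta_1,\theta_2,\theta_3$; since $\boldsymbol{\Psi}_Z$ contains an open set (it is $\btheta(\mathbf{p})$ as $\mathbf{p}$ ranges over an open region, and the map $h$ of Lemma \ref{lm:h2} has an analytic inverse there), and since $h_{00}(\btheta)=(1-\theta_{10}-\theta_{01}-\theta_{11})^{1/k}$ is analytic on a region containing $\mathbf{0}$, the two analytic functions agree on an open set and hence on the whole common domain of analyticity, which includes a full neighborhood of $\mathbf{0}$ and all of $int(\boldsymbol{\Psi}_{\btheta})$. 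In other words, $f$ would automatically be an unbiased estimator of $\mathbf{p}=h(\btheta)$ for the \emph{unrestricted} $IMN_3$-type model as well; by the same token it would be forced to equal, coordinatewise, the unique estimator produced by Theorem \ref{mnub}. But we have already exhibited a sample point $\mathbf{z}=(1,1,0)$ at which that estimator violates $\hat p_{00}+\hat p_{10}+\hat p_{01}\le 1$, so it is not proper — contradiction. (One must check that the terminal point of $\mathcal{S}_3$ corresponding to the relevant monomial is actually reachable and carries the offending coefficient; this is where the properness of $\boldsymbol{\Psi}_Z\subsetneq\boldsymbol{\Psi}_{\btheta}$ does the real work, exactly as in Theorem \ref{thm:np1}.)

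To organize the write-up: (i) express the unbiasedness constraints as the monomial identities above; (ii) invoke analytic continuation, using the open-set content of $\boldsymbol{\Psi}_Z$ and Lemma \ref{lm:h2}, to promote these identities to the whole parameter space and to a neighborhood of the origin; (iii) conclude that $f$ is the unique estimator of Theorem \ref{mnub}, or rather that each coordinate of $f$, restricted appropriately, coincides with it; (iv) plug in $\mathbf{z}=(1,1,0)$ and reuse the displayed computation showing $\hat p_{00}+\hat p_{10}+\hat p_{01}>1$ for $k>1$, contradicting properness. The main obstacle I anticipate is step (ii)–(iii): making rigorous the claim that an arbitrary multinomial sampling plan's unbiased estimator is pinned down, on the reachable monomials, by the inverse-multinomial estimator — one needs that the family of monomials $\{\theta_0^{n_0}\cdots\theta_3^{n_3}:\mathbf{n}\in\mathcal{B}_{\mathcal{S}_3}\}$ is linearly independent (true, since distinct terminal points give distinct exponent vectors) and that the target analytic function has a \emph{unique} power-series representation, so the coefficients $f_{00}(\mathbf{n})C(\mathbf{n})$ are determined; the $k>1$ restriction enters only to guarantee $1/k$ is not an integer, so that $\theta_0^{1/k}$ genuinely has an infinite, non-polynomial expansion and the argument is not vacuous.
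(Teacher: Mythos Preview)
Your steps (i)--(ii) are correct and match the paper: the unbiasedness identities are power series in $\btheta$, and analytic continuation from the open set $\boldsymbol{\Psi}_Z$ to all of $\boldsymbol{\Psi}_{\btheta}$ is the key move. But steps (iii)--(iv) contain a genuine gap that you yourself flag. An arbitrary multinomial sampling plan $\mathcal{S}_3$ has boundary points $(n_0,n_1,n_2,n_3)$ that need not coincide with those of any $IMN_3(c,\btheta)$ plan; the two estimators live on different sample spaces and there is no reason the values of $f$ should agree with those of the Result~\ref{res:ub2} estimator at any particular point. In particular, the point $\mathbf{z}=(1,1,0)$ is a terminal point of the inverse multinomial plan, not of $\mathcal{S}_3$, so you cannot ``plug it in'' to $f$. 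Matching Taylor coefficients does not help either: once you expand $\theta_0=1-\theta_{10}-\theta_{01}-\theta_{11}$, many boundary points of $\mathcal{S}_3$ contribute to the same monomial in $(\theta_{10},\theta_{01},\theta_{11})$, so the coefficients are linear combinations of values $f(\mathbf{n})C(\mathbf{n})$, not individual values.

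The paper closes the argument much more simply, staying in the \emph{parameter} space rather than the sample space. After your step (ii), one has $\mathrm{E}_{\mathcal{S}_3}\bigl(\hat{p}_{00}+\hat{p}_{10}+\hat{p}_{01}\bigr)=h_{00}(\btheta)+h_{10}(\btheta)+h_{01}(\btheta)$ for every $\btheta\in\boldsymbol{\Psi}_{\btheta}$. Now exhibit a single $\btheta\in\boldsymbol{\Psi}_{\btheta}\setminus\boldsymbol{\Psi}_Z$ at which this sum exceeds $1$ (the paper uses $\btheta=(.45,.45,.05)$, giving $2(0.5)^{1/k}-(0.05)^{1/k}>1$ for $k>1$). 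If $f$ were proper, then $\hat{p}_{00}+\hat{p}_{10}+\hat{p}_{01}\le 1$ at every sample point, hence its expectation is $\le 1$ for every $\btheta$ --- contradiction. No identification of $f$ with any specific estimator, and no particular sample point, is needed; replace your steps (iii)--(iv) with this expectation bound and the proof is complete.
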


\section{Two disease case with misclassification}
In this section we consider the two disease testing problem when misclassification is present, by looking at two models for incorporating such testing errors. In both cases we will assume the misclassification parameters to be known a priori, although the results on non-proper estimators will hold more generally.

The first model, as introduced in \cite{li2017}, is very general, requiring no assumptions on how the marginal testing errors are combined. The downside, as we shall see, is that this requires a large number of parameters, knowledge of which may not be available for many of the assays used in applications. Let $\tilde{\varphi}_{a},\ a \in \{00, 10, 01, 11\}$ be the true latent value of the observed random variable $\varphi_a$. Then, we have the misclassification parameters $\pi_{a|b} = P(\varphi_a|\tilde{\varphi}_b),\ a, b \in \{00, 10, 01, 11\}$. While this indicates 16 parameters, each one can be expressed as a linear combination of three others, so that the model consists of twelve extra parameters.

If we again let $\mathbf{Z}$ be the sum of the $\boldsymbol{\vartheta}_i^{(k)}$s until $c$ groups are observed without either disease then, with the above misclassification values, and $\btheta$ as in (\ref{eq:theta1}) and (\ref{eq:theta2}), we now have $\mathbf{Z} \sim IMN_3(c, \bEta)$ where $\bEta = (\eta_{10}, \eta_{01}, \eta_{11})$ and $\eta_{00} = 1 - \mathbf{1}^\prime \bEta$ with
\[\eta_a = \pi_{a|00}\theta_{00} + \pi_{a|10}\theta_{10} + \pi_{a|01}\theta_{01} + \pi_{a|11}\theta_{11},\ a \in \{00, 10, 01, 11\}.\]

With $\boldsymbol{\pi}_{00} = (\pi_{10|00}, \pi_{01|00}, \pi_{11|00})^\prime$, and
\[\arraycolsep=5pt
\boldsymbol{\Phi} = \left(\begin{array}{ccc}
\pi_{10|10} - \pi_{10| 00} & \pi_{10|01} - \pi_{10| 00} & \pi_{10|11} - \pi_{10| 00} \\
\pi_{01|10} - \pi_{01| 00} & \pi_{01|01} - \pi_{01| 00} & \pi_{01|11} - \pi_{01| 00} \\
\pi_{11|10} - \pi_{11| 00} & \pi_{11|01} - \pi_{11| 00} & \pi_{11|11} - \pi_{11|00} \end{array}\right),\] the parameter vector for this model can be expressed succinctly as
\begin{equation}
\label{eq:eta1}
\bEta = \boldsymbol{\pi}_{00} + \boldsymbol{\Phi}\btheta.
\end{equation}

\subsection{Independent misclassification errors}
\label{sec:inde}
An alternative, simplified, model to the above assumes there are only four misclassification parameters, specificity and sensitivity for each marginal disease, and that the joint errors can be found assuming independence. Examples of this model can be found in \cite{pfeiffer2002} and \cite{tebbs2013}, among others.   Formally, if $\pi_0^{(i)}$ and $\pi_{1}^{(i)}$ are the specificity and sensitivity, respectively, for the test under the $i$th disease, $i = 1, 2$, then we assume $\pi_{10|00} = (1 - \pi_{0}^{(1)})\pi_{0}^{(2)}$, $\pi_{10|10} = \pi_{1}^{(1)}\pi_{0}^{(2)}$, and so on for all twelve parameters above.

\subsection{Identifiability of model}

Before addressing the question of unbiased estimation, we first consider conditions to ensure the models presented above are identifiable. This is an important question which has yet to be dealt with explicitly in the literature.

\begin{theorem}
	\label{thm:ident}
	Let $\mathbf{Z} \sim IMN_3(c, \bEta)$ with $\bEta$ as in (\ref{eq:eta1}). Then, the model is identifiable if and only if the determinant of $\boldsymbol{\Phi}$ is non-zero, that is
	\[\arraycolsep=5pt
	|\boldsymbol{\Phi}| = \left|\begin{array}{ccc}
	\pi_{10|10} - \pi_{10| 00} & \pi_{10|01} - \pi_{10| 00} & \pi_{10|11} - \pi_{10| 00} \\
	\pi_{01|10} - \pi_{01| 00} & \pi_{01|01} - \pi_{01| 00} & \pi_{01|11} - \pi_{01| 00} \\
	\pi_{11|10} - \pi_{11| 00} & \pi_{11|01} - \pi_{11| 00} & \pi_{11|11} - \pi_{11|00} \end{array}\right| \neq 0.\]
\end{theorem}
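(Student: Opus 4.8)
The plan is to reduce identifiability of the model to injectivity of the composite map $\mathbf{p} \mapsto \btheta \mapsto \bEta$, and then to isolate the only non-trivial link in that chain, namely the affine map determined by $\boldsymbol{\Phi}$.

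First I would record that the inverse multinomial family is identified by its own parameter: if $\mathbf{Z} \sim IMN_3(c, \bEta)$, then $P(\mathbf{Z} = \mathbf{0}) = \eta_{00}^c$ and $P(\mathbf{Z} = \mathbf{e}_j) / P(\mathbf{Z} = \mathbf{0}) = c\,\eta_j$ for $j \in \{10, 01, 11\}$, so the law of $\mathbf{Z}$ determines $\bEta$ uniquely. Consequently two prevalence vectors $\mathbf{p}, \mathbf{p}'$ induce the same distribution for $\mathbf{Z}$ if and only if $\bEta(\mathbf{p}) = \bEta(\mathbf{p}')$, and the model is identifiable precisely when $\mathbf{p} \mapsto \bEta$ is one-to-one on $\boldsymbol{\Psi}_{\mathbf{p}}$.

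Next, by Lemma \ref{lm:h2} the map $\mathbf{p} \mapsto \btheta$ of (\ref{eq:theta1})--(\ref{eq:theta2}) is a bijection from $\boldsymbol{\Psi}_{\mathbf{p}}$ onto $\boldsymbol{\Psi}_Z$ with smooth inverse $h$; since $\boldsymbol{\Psi}_{\mathbf{p}}$ is open and this map is a diffeomorphism onto its image (inverse function theorem), $\boldsymbol{\Psi}_Z$ is a non-empty open subset of $\mathbb{R}^3$. Because $\bEta = \boldsymbol{\pi}_{00} + \boldsymbol{\Phi}\btheta$ by (\ref{eq:eta1}), the map $\mathbf{p} \mapsto \bEta$ factors as this bijection followed by the affine map $A(\btheta) = \boldsymbol{\pi}_{00} + \boldsymbol{\Phi}\btheta$, so injectivity of $\mathbf{p} \mapsto \bEta$ is equivalent to injectivity of $A$ restricted to $\boldsymbol{\Psi}_Z$. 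Finally, an affine map on $\mathbb{R}^3$ restricted to a non-empty open set is injective if and only if its linear part is invertible: if $|\boldsymbol{\Phi}| \neq 0$ then $A$ is injective on all of $\mathbb{R}^3$; if $|\boldsymbol{\Phi}| = 0$, pick $\mathbf{v} \in \ker\boldsymbol{\Phi} \setminus \{\mathbf{0}\}$ and $\btheta_0 \in \boldsymbol{\Psi}_Z$, and openness yields $\epsilon > 0$ with $\btheta_0 + \epsilon\mathbf{v} \in \boldsymbol{\Psi}_Z$ and $A(\btheta_0) = A(\btheta_0 + \epsilon\mathbf{v})$, which pulls back via $h$ to two distinct prevalence vectors sharing the same $\bEta$, so the model fails to be identifiable. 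Chaining the three equivalences gives the theorem.

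I expect the main obstacle to be the bookkeeping about which set must be open: one must confirm that $\boldsymbol{\Psi}_Z$, although a proper subset of the full parameter space $\boldsymbol{\Psi}_{\btheta}$, still contains a neighborhood of each of its points, so that a non-trivial kernel of $\boldsymbol{\Phi}$ genuinely produces two admissible parameter vectors with the same image. This is exactly where Lemma \ref{lm:h2} and the inverse function theorem enter; the remaining ingredients are the one-line identifiability argument for the inverse multinomial family and elementary linear algebra.
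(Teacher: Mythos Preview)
Your proposal is correct and follows essentially the same route as the paper: reduce identifiability to injectivity of $\mathbf{p} \mapsto \bEta$, factor through $\btheta$ using Lemma~\ref{lm:h2}, and conclude via the affine structure of~(\ref{eq:eta1}). Where the paper simply invokes the full-rank exponential family property and asserts the affine equivalence outright, you supply an explicit probability computation for the first step and an openness argument (via the inverse function theorem) to secure the ``only if'' direction on the restricted domain $\boldsymbol{\Psi}_Z$ --- these are welcome refinements of the paper's sketch rather than a different strategy.
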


\begin{corollary}
	\label{cor:ident}
	For the independent errors model given in Section \ref{sec:inde}, the model is identifiable if and only if both $\pi_0^{(1)} + \pi_1^{(1)} \neq 1$ and $\pi_0^{(2)} + \pi_{1}^{(2)} \neq 1$.
\end{corollary}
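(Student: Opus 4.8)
The plan is to specialize Theorem \ref{thm:ident} to the independent-errors model and show that, under the substitutions of Section \ref{sec:inde}, the determinant $|\boldsymbol{\Phi}|$ factors cleanly. First I would write out all twelve misclassification parameters explicitly in terms of $\pi_0^{(1)}, \pi_1^{(1)}, \pi_0^{(2)}, \pi_1^{(2)}$ using the independence assumption; for instance $\pi_{10|10} = \pi_1^{(1)}\pi_0^{(2)}$, $\pi_{10|00} = (1-\pi_0^{(1)})\pi_0^{(2)}$, $\pi_{10|01} = (1-\pi_0^{(1)})(1-\pi_1^{(2)})$, and so on. Then each entry $\pi_{a|b} - \pi_{a|00}$ of $\boldsymbol{\Phi}$ becomes a product of a difference in the first-disease parameters and a (possibly shifted) second-disease factor. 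Introducing the shorthand $\nu_i = \pi_0^{(i)} + \pi_1^{(i)} - 1$ for $i = 1,2$ — the marginal identifiability quantities — I expect each column of $\boldsymbol{\Phi}$ to carry a common factor that, after bookkeeping, pulls $\nu_1$ and $\nu_2$ out of the determinant.

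The key computational step is to recognize the tensor-product (Kronecker) structure hidden in $\boldsymbol{\Phi}$. Writing the four joint states in the order $00, 10, 01, 11$, the independence assumption means the $4\times 4$ matrix of all $\pi_{a|b}$ is the Kronecker product $M^{(1)} \otimes M^{(2)}$ of the two $2\times 2$ marginal misclassification matrices $M^{(i)} = \begin{pmatrix} \pi_0^{(i)} & 1-\pi_1^{(i)} \\ 1-\pi_0^{(i)} & \pi_1^{(i)}\end{pmatrix}$. The matrix $\boldsymbol{\Phi}$ is obtained from $M^{(1)}\otimes M^{(2)}$ by deleting the row and column indexed by $00$ and subtracting the $00$-column; this is exactly the operation that, for a single $2\times 2$ matrix $M^{(i)}$, produces the $1\times 1$ "reduced" quantity $\det M^{(i)}$ up to sign. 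I would therefore argue that $\boldsymbol{\Phi}$ is similar (via a fixed integer change of basis coming from the correspondence $(\varphi_1,\varphi_2)\leftrightarrow\Vphi$) to a block structure whose determinant is a monomial in $\det M^{(1)} = \nu_1$ and $\det M^{(2)} = \nu_2$; tracking exponents gives $|\boldsymbol{\Phi}| = \pm\, \nu_1^{a}\nu_2^{b}$ for some positive integers $a,b$. A short direct expansion of the $3\times 3$ determinant (three terms, each a $2\times 2$ minor) would confirm the precise form and the exponents without appealing to the Kronecker machinery, and I would include that as the actual verification.

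Once $|\boldsymbol{\Phi}|$ is shown to equal a nonzero constant times a positive power of $\nu_1$ times a positive power of $\nu_2$, the corollary is immediate: $|\boldsymbol{\Phi}| \neq 0$ if and only if $\nu_1 \neq 0$ and $\nu_2 \neq 0$, i.e.\ $\pi_0^{(1)} + \pi_1^{(1)} \neq 1$ and $\pi_0^{(2)} + \pi_1^{(2)} \neq 1$, and then Theorem \ref{thm:ident} gives identifiability in both directions. The main obstacle I anticipate is purely organizational — choosing the sign conventions and the ordering of the joint states so that the $3\times 3$ determinant expansion visibly collapses to the product $\nu_1^{a}\nu_2^{b}$ rather than to an unrecognizable polynomial; getting the Kronecker-deletion bookkeeping right (which row/column is removed, and how the "subtract the $00$ column" step interacts with the product structure) is the one place where a careless index could hide the factorization. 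Everything after that factorization is a one-line logical deduction.
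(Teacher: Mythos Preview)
Your proposal is correct and follows the same route as the paper: specialize Theorem~\ref{thm:ident} to the independent-errors model, compute $|\boldsymbol{\Phi}|$, and observe that it vanishes exactly when $\nu_1\nu_2=0$. The paper simply writes out the $3\times 3$ matrix and states $|\boldsymbol{\Phi}|=(\nu_1\nu_2)^2$, so your direct expansion will reproduce this with $a=b=2$; the Kronecker-product heuristic you add is a pleasant conceptual explanation for why the factorization should exist, but is not needed for (and does not appear in) the paper's argument.
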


Similar to the one-disease case, the conditions of Corollary \ref{cor:ident} will always be satisfied if we make the reasonable assumption that all misclassification parameters are greater than $0.5$. The more general case, as presented in Theorem \ref{thm:ident}, is easy to check in a given situation, but does not easily yield itself to simplified conditions.

\subsection{Non-properness of unbiased estimator for two diseases with misclassification}

As in the case with no misclassification, Theorem 1 can be used to construct an unbiased estimator under either of the misclassification models presented above. Since this construction is merely a technical generalization of the previous two cases, it is excluded here.

Likewise, as in the previous cases, we can generalize Theorem \ref{thm:np2} to show that this holds under any multinomial sampling plan when misclassification is present. The proof of this result follows directly from Theorem \ref{thm:np2} since, assuming the conditions of Theorem \ref{thm:ident} hold, $\bEta$ is a full rank affine transformation of $\btheta$.

\begin{theorem}
	\label{thm:np2mc}
	Let $\mathcal{S}_3$ be a multinomial sampling plan in four dimensions with boundary points $\mathcal{B}_{\mathcal{S}_3}$ such that at each step the $i$th coordinate, $i = 0, 1, 2, 3$, is increased with probability $\eta_i$, where $\bEta$ is as given in (\ref{eq:eta1}). Then, there exists no proper unbiased estimator of $\mathbf{p}$ under $\mathcal{S}_3$.
\end{theorem}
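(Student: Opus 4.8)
The plan is to reduce Theorem~\ref{thm:np2mc} to Theorem~\ref{thm:np2}, exploiting the fact that under identifiability $\bEta$ is an invertible affine image of $\btheta$ and that the obstruction used in the proof of Theorem~\ref{thm:np2} is preserved by such maps. First I would dispose of the non-identifiable case: if $|\boldsymbol{\Phi}| = 0$ then, by Theorem~\ref{thm:ident}, two distinct values of $\btheta$ — hence, by Lemma~\ref{lm:h2}, two distinct values of $\mathbf{p}$ — give the same $\bEta$ in (\ref{eq:eta1}) and therefore the same law of the terminal statistic under $\mathcal{S}_3$, so no statistic, proper or not, can be unbiased for $\mathbf{p}$. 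So I would assume $|\boldsymbol{\Phi}| \neq 0$ and set $T(\btheta) = \boldsymbol{\pi}_{00} + \boldsymbol{\Phi}\btheta$, an invertible affine map, extended to act on the full probability vector $(\theta_0,\theta_1,\theta_2,\theta_3)$ with image $(\eta_0,\eta_1,\eta_2,\eta_3)$. Since $\bEta$ is the image of the probability vector $\btheta$ under the column-stochastic misclassification matrix $(\pi_{a|b})$, $T$ maps the restricted parameter space $\boldsymbol{\Psi}_Z$ into the interior of the full inverse-multinomial parameter region in the $\bEta$-coordinates, its image being a \emph{proper} subset of that region.

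Suppose now, toward a contradiction, that $f(\mathbf{x})$ is a proper unbiased estimator of $\mathbf{p}$ under $\mathcal{S}_3$ when the coordinate-increment probabilities are the $\eta_i$ of (\ref{eq:eta1}). The distribution of the terminal statistic of a multinomial sampling plan depends on the parameter only through the cell-probability vector, so, writing $\bEta = T(\btheta)$ and $\tilde h = h \circ T^{-1}$ with $h$ as in Lemma~\ref{lm:h2}, the unbiasedness identity $E_{\bEta}[f(\mathbf{X})] = \mathbf{p}$ says precisely that $f$ is a proper unbiased estimator, in the family indexed by the cell probabilities $\bEta = T(\btheta)$ ranging over $T(\boldsymbol{\Psi}_Z)$, of the analytic function $\tilde h$ of those cell probabilities.

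It then remains to re-run the argument behind Theorem~\ref{thm:np2} in the $\bEta$-coordinates. That argument rests on three features, each invariant under the full-rank affine change of cell-probability coordinates $T$: (i) $\mathbf{p}$ is an analytic but non-polynomial function of the cell probabilities, retaining the fractional-power branching inherited from the $k$th roots appearing in $h$, since an affine substitution neither removes those singularities nor turns $\tilde h$ into a polynomial; (ii) the effective parameter set $T(\boldsymbol{\Psi}_Z)$ is a proper subset of the full parameter region, so that, by Theorem~\ref{mnub} together with its extension to general sampling plans, any unbiased estimator is forced to coincide with the one produced from $\tilde h$; and (iii) this forced estimator leaves $\boldsymbol{\Psi}_{\mathbf{p}}$ at some sample point — for instance $\mathbf{z} = (1,1,0)$, as used after Result~\ref{res:ub2}, where the coefficient growth caused by the branch singularity of $\tilde h$ still pushes the corresponding estimate past the boundary of the simplex. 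This contradicts the properness of $f$ and proves the theorem. As in the remark following Theorem~\ref{thm:np1}, no knowledge of the misclassification parameters is needed: leaving the $\pi_{a|b}$ unspecified only enlarges the family, and a proper unbiased estimator of $\mathbf{p}$ for the enlarged family would in particular be one for each admissible fixed $\boldsymbol{\Phi}$ and $\boldsymbol{\pi}_{00}$, which the above excludes.

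The step I expect to be the main obstacle is item~(iii). Because $T^{-1}$ is affine but not the identity, the unbiased estimator forced by Theorem~\ref{mnub} is a genuinely different function of the sample than the one displayed in Result~\ref{res:ub2}, so one must check that composing $h$ with $T^{-1}$ does not tame the growth of its power-series coefficients and that some explicit sample point still witnesses a violation — equivalently, that the affine reparametrisation does not accidentally move the estimand into the class of properly estimable functions. Everything else is the routine re-coordinatisation of the proof of Theorem~\ref{thm:np2}.
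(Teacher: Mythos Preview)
Your high-level reduction --- invert the affine map $\bEta = \boldsymbol{\pi}_{00} + \boldsymbol{\Phi}\btheta$ and invoke Theorem~\ref{thm:np2} --- is exactly the paper's approach, and the handling of the non-identifiable case is fine. Where you go astray is in your description of what the proof of Theorem~\ref{thm:np2} actually does, and this creates the spurious ``main obstacle'' in item~(iii).

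The proof of Theorem~\ref{thm:np2} does \emph{not} force a unique estimator (your item~(ii)) and does \emph{not} exhibit a bad sample point like $\mathbf{z}=(1,1,0)$ (your item~(iii)). The calculation at $\mathbf{z}=(1,1,0)$ after Result~\ref{res:ub2} concerns only the particular inverse-multinomial estimator of Result~\ref{res:ub2}; for a general sampling plan $\mathcal{S}_3$ there is no canonical estimator to evaluate at a sample point, and no completeness/uniqueness statement of the kind you invoke. Instead, the paper argues at the level of the \emph{expectation}: if $f$ is unbiased on the restricted space $\boldsymbol{\Psi}_Z$, then, since $\mathrm{E}_{\btheta}[f]$ is a power series in $\btheta$ and $h^*$ is analytic on the whole open simplex $\boldsymbol{\Psi}_{\btheta}$, the identity $\mathrm{E}_{\btheta}[f]=h^*(\btheta)$ extends to all $\btheta\in\boldsymbol{\Psi}_{\btheta}$; at $\btheta=(0.45,0.45,0.05)\in\boldsymbol{\Psi}_{\btheta}\setminus\boldsymbol{\Psi}_Z$ one has $\mathbf{1}^\prime h^*(\btheta)>1$, which a proper estimator's expectation cannot match.

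With this correct reading, your ``obstacle'' evaporates. Under $T^{-1}$ the composite $h\circ T^{-1}$ is analytic on the full $\bEta$-simplex, and because the misclassification matrix $(\pi_{a|b})$ is column-stochastic the map $T$ carries the entire $\btheta$-simplex into the $\bEta$-simplex; in particular the bad parameter point $\btheta=(0.45,0.45,0.05)$ corresponds to a legitimate $\bEta$, and the same expectation contradiction fires. No coefficient-growth analysis, no sample-point computation, and no uniqueness of estimators is needed.
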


\section{Discussion}
We have shown that, outside of the standard case when testing one disease with no misclassification, it is impossible to get a proper unbiased estimator in the group testing problem. This result holds very generally, under any design from the classes of binomial or multinomial sampling plans, not only those previously considered in the literature. While, for the multinomial case, we have provided proofs only for two diseases, the same techniques can be applied to show this result holds for any number of traits.

Of course, such scenarios are the norm in applications, so the question remains as to how estimation should best be carried out in light of the present bias.
For one disease with misclassification, there do exist limited results for this problem. For example, under fixed binomial sampling, the first order bias correction given in \cite{tu1995} can be used to construct a debiased estimator. Still, much more work is needed in this area, analogous to the wide array of estimators in the literature for one disease when no misclassification is assumed. One possible approach is to construct minimal bias estimators as described in \cite{sirazdinov1956} and \cite{hall1963}, although approaches minimizing the risk are generally more favored in the statistical literature. Alternatively, approaches such as those found in \cite{bilder2005} and \cite{hepworth2017}, among others, could possibly be extended to include misclassification. More work is needed to understand how such approaches might generalize, and what the properties of the resultant estimators will be.

For the two disease case, unfortunately, it is much more difficult to give recommendations at this time. The problem in this case is much harder since it is both multivariate and has a restricted parameter space, even without misclassification. There currently exist no results in the literature related to bias reduction for the the two disease group testing scenario. This will be an important area for future research if group testing methods are to be applied in such cases.

\section{Proofs}
\subsection{Proof of Theorem \ref{mnub}}
Since $h$ is analytic, both $h$ and $g(\boldsymbol{\mu}) = \frac{h(\boldsymbol{\mu})}{\mu_0^c}$ can be expanded as a Taylor series over an appropriate region, say $R$. This expansion has the form,
\[g(\boldsymbol{\mu}) = \sum_{x_1, \ldots, x_t = 0}^\infty \frac{1}{x_1!\cdots x_t!}\left. \frac{\partial^{\sum_{i=1}^t x_i }g(\boldsymbol{\mu})}{\partial \mu_1^{x_1} \cdots \partial \mu_t^{x_t}}\right|_{\boldsymbol{\mu} = \mathbf{0}} \prod_{i=1}^t \mu_i^{x_i}.\]

Then, we have
\begin{align*}
\mathrm{E}\left(f(\mathbf{X})\right) &= \sum_{x_1, \ldots, x_t = 0}^\infty f(\mathbf{x}) \binom{c + \sum_{i=1}^t x_i - 1}{c-1, x_1, \ldots, x_t} \mu_0^c \prod_{i=1}^t \mu_i^{x_i} \\
&= \mu_0^c\sum_{x_1, \ldots, x_t = 0}^\infty \frac{1}{x_1!\cdots x_t!} \left.\frac{\partial^{\sum_{i=1}^t x_i }g(\boldsymbol{\mu})}{\partial \mu_1^{x_1} \cdots \partial \mu_t^{x_t}}\right|_{\boldsymbol{\mu} = \mathbf{0}} \prod_{i=1}^t \mu_i^{x_i} \\
&= \mu_0^c g(\boldsymbol{\mu}) \\
&= h(\boldsymbol{\mu}),\end{align*}
for all $\bmu \in int(\Psi)$.

Conversely, if $h$ is estimable unbiasedly, we have for a function $\delta(\mathbf{x})$ and any $\bmu \in int(M)$
\[h(\boldsymbol{\mu}) = \sum_{x_1, \ldots, x_t = 0}^\infty \delta(\mathbf{x}) \binom{c + \sum_{i=1}^t x_i - 1}{c-1, x_1, \ldots, x_t} \mu_0^c \prod_{i=1}^t \mu_i^{x_i}.\] Since this holds for any $\bmu \in int(\Psi)$, $h$ is an analytic function on $R$, hence has a unique Taylor expansion. It follows then that
\begin{align*}
g(\boldsymbol{\mu}) &= \sum_{x_1, \ldots, x_t = 0}^\infty \frac{1}{x_1!\cdots x_t!}\left. \frac{\partial^{\sum_{i=1}^t x_i }g(\boldsymbol{\mu})}{\partial \mu_1^{x_1} \cdots \partial \mu_t^{x_t}}\right|_{\boldsymbol{\mu} = \mathbf{0}} \prod_{i=1}^t \mu_i^{x_i}\\ &= \sum_{x_1, \ldots, x_t = 0}^\infty \delta(\mathbf{x}) \binom{c + \sum_{i=1}^t x_i - 1}{c-1, x_1, \ldots, x_t} \prod_{i=1}^t \mu_i^{x_i},
\end{align*} and equating terms yields
\[\delta(\mathbf{x}) = \frac{(c-1)!}{(c + \sum_{i=1}^t x_i - 1)!} \left.\frac{\partial^{\sum_{i=1}^t x_i }g(\boldsymbol{\mu})}{\partial \mu_1^{x_1} \cdots \partial \mu_t^{x_t}}\right|_{\boldsymbol{\mu} = \mathbf{0}} = f(\mathbf{x})\]
for each $\mathbf{x}$.

\subsection{Proof of Result \ref{res:ub1}}
To apply Theorem 1 in this case, we will require the following lemma.
\begin{lemma}
	Let $\displaystyle{g(\theta) = \frac{(\pi_1 - \theta)^{\xi}}{(1 - \theta)^c}}$, where $\xi = 1 / k$. Then, for any non-negative integer $t$, \[\frac{d^t g(\theta)}{d \theta^t} = \sum_{i=0}^t \binom{t}{i} \frac{(\pi_1 - \theta)^{\xi + i - t}}{(1 - \theta)^{c + i}} \frac{(c + i -1)!}{(c - 1)!}\prod_{j=i}^t \frac{(t - j - \xi)}{(t - i - \xi)}.\]
\end{lemma}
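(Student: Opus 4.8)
The plan is to apply the general Leibniz (higher-order product) rule to the factorization $g(\theta) = g_1(\theta)g_2(\theta)$, where $g_1(\theta) = (\pi_1 - \theta)^{\xi}$ and $g_2(\theta) = (1-\theta)^{-c}$, after first recording closed forms for the derivatives of the two factors.

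The first step is to show, by an immediate induction on $m$ (or directly from the chain rule), that for every nonnegative integer $m$,
\[
\frac{d^m}{d\theta^m}(\pi_1-\theta)^{\xi} = \Big[\prod_{\ell=0}^{m-1}(\ell-\xi)\Big](\pi_1-\theta)^{\xi-m}
\quad\text{and}\quad
\frac{d^m}{d\theta^m}(1-\theta)^{-c} = \frac{(c+m-1)!}{(c-1)!}\,(1-\theta)^{-c-m},
\]
with the empty product ($m=0$) equal to $1$: each differentiation of $(\pi_1-\theta)^{\xi-r}$ brings down a factor $-(\xi-r) = (r-\xi)$, and each differentiation of $(1-\theta)^{-(c+r)}$ brings down a factor $(c+r)$.

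The second step is to combine these through
\[
\frac{d^t g}{d\theta^t} = \sum_{i=0}^{t}\binom{t}{i}\frac{d^{\,t-i}}{d\theta^{t-i}}(\pi_1-\theta)^{\xi}\cdot\frac{d^{\,i}}{d\theta^{i}}(1-\theta)^{-c}
= \sum_{i=0}^{t}\binom{t}{i}\Big[\prod_{\ell=0}^{t-i-1}(\ell-\xi)\Big]\frac{(c+i-1)!}{(c-1)!}\,\frac{(\pi_1-\theta)^{\xi+i-t}}{(1-\theta)^{c+i}}.
\]
This already matches the stated formula in the binomial coefficient, the factorial ratio, and both exponents; what remains is only to identify $\prod_{\ell=0}^{t-i-1}(\ell-\xi)$ with the product written in the statement. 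Re-indexing by $\ell = t - j$ gives
\[
\prod_{\ell=0}^{t-i-1}(\ell-\xi) = \prod_{j=i+1}^{t}(t-j-\xi) = \frac{1}{t-i-\xi}\prod_{j=i}^{t}(t-j-\xi),
\]
which is the product $\prod_{j=i}^{t}\frac{t-j-\xi}{t-i-\xi}$ in the statement, the denominator occurring once since the $j=i$ numerator factor is itself $t-i-\xi$. Substituting this back completes the proof.

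I do not expect any serious obstacle: the only points requiring care are the empty-product convention when $i=t$ (so that the term $\frac{(c+t-1)!}{(c-1)!}\,\frac{(\pi_1-\theta)^{\xi}}{(1-\theta)^{c+t}}$ enters with coefficient $1$) and the re-indexing that matches the falling factorial to the displayed product of ratios. A fully self-contained alternative is induction on $t$: differentiate the asserted expression for $t$, then merge the two resulting sums using Pascal's rule together with $(t+1-j-\xi) = (t-j-\xi)+1$ and $(c+i) = (c+i-1)+1$; this also works but is more tedious than the Leibniz computation.
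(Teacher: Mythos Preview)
Your proof is correct. The key difference from the paper is methodological: the paper argues by induction on $t$, verifying $t=0,1$ directly, then differentiating the assumed expression for $t=m$ term by term and recombining the two resulting sums via Pascal's identity $\binom{m}{i}+\binom{m}{i-1}=\binom{m+1}{i}$. You instead exploit the product structure $g=(\pi_1-\theta)^{\xi}(1-\theta)^{-c}$ and apply the general Leibniz rule after writing down the closed forms for the iterated derivatives of each factor; the only remaining work is the cosmetic re-indexing $\ell=t-j$ to match the product as displayed. Your route is shorter and avoids the coefficient bookkeeping of the inductive merge; the paper's induction is more laborious but entirely self-contained, not presupposing the Leibniz formula. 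Your closing remark that induction is a viable alternative is exactly the route the paper takes.
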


\begin{proof}
		For $t=0$ and $t=1$ the result is a straightforward calculation, and we prove the general case using induction. Suppose the statement holds for $t = m$ so that
		\[\frac{d^{m + 1} g(\theta)}{d \theta^{m + 1}} = \sum_{i=0}^m \binom{m}{i} \frac{(c + i -1)!}{(c - 1)!}\prod_{j=i}^{m} \frac{(m - j - \xi)}{(m - i - \xi)}\frac{d}{d\theta}\left(\frac{(\pi_1 - \theta)^{\xi + i - m}}{(1 - \theta)^{c + i}}\right) ,\] and, for each $i$,
		\[\frac{d}{d\theta}\left(\frac{(\pi_1 - \theta)^{\xi + i - m}}{(1 - \theta)^{c + i}}\right) = \frac{(m - i - \xi)(\pi_i - \theta)^{\xi + i - m - 1}}{(1 - \theta)^{c + i}} + \frac{(c + i)(\pi_1 - \theta)^{\xi + i - m}}{(1 - \theta)^{c + i + 1}}.\]
		
		We now look at the resultant coefficients of the terms $\displaystyle{\frac{(\pi_1 - \theta)^{\xi + i- (m + 1)}}{(1 - \theta)^{c + i}}}$ for each $i$.
		
		For $i=0$ and $i= m$ we have, respectively, $\displaystyle{\prod_{j=0}^{m + 1} \frac{(m + 1 - j - \xi)}{(m + 1 - \xi)}}$ and $\displaystyle{\frac{(c + m + 1 - 1)!}{(c - 1)!}}$.
		
		For $1 \leq i \leq m - 1$ we have
		\begin{align*}
		&\binom{m}{i}\frac{(c + i - 1)!}{(c - 1)!}\prod_{j=i}^m \frac{(m - j - \xi)}{(m - i - \xi)} (m - i - \xi)\\ &\qquad + \binom{m}{i-1} \frac{(c + i - 2)!}{(c - 1)!}\prod_{j=i-1}^m \frac{(m - j - \xi)}{(m - i + 1 - \xi)} (c + i - 1)  \\
		&=\left[\binom{m}{i} + \binom{m}{i-1}\right]\frac{(c + i - 1)!}{(c - 1)!}\prod_{j=i}^{m + 1} \frac{(m + 1 - j - \xi)}{(m + 1 - i - \xi)}\\
		&=\binom{m+1}{i}\frac{(c + i - 1)!}{(c - 1)!}\prod_{j=i}^{m + 1} \frac{(m + 1 - j - \xi)}{(m + 1 - i - \xi)}
		\end{align*}
		
		Combining yields
		\[\frac{d^{m + 1} g(\theta)}{d \theta^{m + 1}} = \sum_{i=0}^{m + 1} \binom{m + 1}{i} \frac{(\pi_1 - \theta)^{\xi + i - (m + 1)}}{(1 - \theta)^{c + i}} \frac{(c + i -1)!}{(c - 1)!}\prod_{j=i}^{m + 1} \frac{(m+1 - j - \xi)}{(m + 1 - i - \xi)}.\]
\end{proof}

Now, to apply Theorem 1, we have $\displaystyle{g(\theta) = \frac{(\pi_1 - \theta)^{\xi}}{\nu^{\xi}(1 - \theta)^c}}$, with $\xi = 1 / k$, which, by Lemma 2 has $t$th derivative
\[\frac{d^t g(\theta)}{d \theta^t} =  \frac{1}{\nu^{\xi}}\sum_{i=0}^t \binom{t}{i} \frac{(\pi_1 - \theta)^{\xi + i - t}}{(1 - \theta)^{c + i}} \frac{(c + i -1)!}{(c - 1)!}\prod_{j=i}^t \frac{(t - j - \xi)}{(t - i - \xi)}.\] Evaluating at $\theta = 0$ yields
\[\left. \frac{d^t g(\theta)}{d \theta^t}\right|_{\theta = 0} =  \left(\frac{\pi_1}{\nu}\right)^{\xi}\sum_{i=0}^t \binom{t}{i} \frac{1}{\pi_1^{t - i}}\frac{(c + i -1)!}{(c - 1)!}\prod_{j=i}^t \frac{(t - j - \xi)}{(t - i - \xi)}, y = 0, 1, 2, \ldots\]

Then, direct application of Theorem 1 yields
\begin{align*}
\hat{q}_{UB}(y) &= \frac{(c-1)!}{(c + y - 1)!} \left(\frac{\pi_1}{\nu}\right)^{\xi}\sum_{i=0}^y \binom{y}{i} \frac{1}{\pi_1^{y - i}}\frac{(c + i -1)!}{(c - 1)!}\prod_{j=i}^y \frac{(y - j - \xi)}{(y - i - \xi)}\\
&= \left(\frac{\pi_1}{\nu}\right)^{\xi}\sum_{i=0}^y \binom{y}{i} \frac{1}{\pi_1^{y - i}}\frac{(c + i -1)!}{(c + y- 1)!}\prod_{j=i}^y \frac{(y - j - \xi)}{(y - i - \xi)}.\\
\end{align*}
Subtracting the above from one gives the desired unbiased estimator of $p$.

\subsection{Proof of Corollary \ref{cor:ub1}}
While it is possible to derive this result algebraically from Result 1, we provide here a much simpler direct proof using Theorem \ref{mnub}.
We have $q = h(\theta) = ( 1 - \theta)^{\xi}$, where $\xi = 1 / k$, so we set $\displaystyle{g(\theta) = \frac{(1 - \theta)^{\xi}}{(1 - \theta)^c} = (1 - \theta)^{\xi - c}}$. Differentiating $t$ times with respect to $\theta$ yields
\begin{align*}
g^{(t)}(\theta) &= (-1)^t(\xi - c)(\xi - c - 1) \times \cdots \times (\xi - c - t + 1) (1 - \theta)^{\xi - c - t}\\ &= \frac{1}{(c + t - \xi)}\prod_{i = 0}^t(c  + i - \xi) (1 - \theta)^{\xi - c - m}, t = 0, 1, 2, \ldots
\end{align*} Evaluating this derivative at $\theta = 0$ and applying Theorem \ref{mnub} yields
\begin{align*}
\hat{q}_{UB}(y) &=\frac{(c-1)!}{(c + y - 1)!}\frac{1}{(c + y - \xi)}\prod_{i = 0}^y(c  + i - \xi)  \\
&= \frac{1}{\left(1 - \frac{1}{k(c + y)}\right)}\prod_{i = 0}^y \left(1 - \frac{1}{k(c + i)}\right), y = 0, 1, 2, \ldots
\end{align*}
As above, the unbiased estimator of $p$ is then found by subtracting this value from $1$.

\subsection{Proof of Theorem \ref{thm:np1}}
For each $(x, y) \in \mathcal{B}$, let $K(x, y)$ be the number of ways to reach the given point, and suppose that $f(x, y)$ is an unbiased estimator of $h(\theta) = q$. Then, we have
\begin{equation}
\label{eq:pfnp1_1}
h(\theta) = \sum_{i=0}^{\infty}\frac{1}{i!}\left.\frac{\partial^i h(\theta)}{\partial \theta^i}\right|_{\theta = 0}\theta^i = \sum_{(x, y) \in \mathcal{B}} f(x, y)K(x, y) \theta^x (1 - \theta)^y \ \text{for all}\ \theta.\end{equation} Since the coefficients for each power of $\theta$ on both sides of the equality must be the same, there must exist a point $(0, y^*) \in \mathcal{B}$ such that $f(0, y^*)K(0, y^*) = h(0)$. Now, there is at most one path to any point on the $y$-axis so, since $(0, y^*) \in \mathcal{B}$, we have $K(0, y^*) = 1$. This yields, $\displaystyle{f(0, y^*) = h(0) = \left(\frac{\pi_ 1}{\pi_1 + \pi_0  - 1}\right)^{1 / k} > 1}$ whenever $\pi_0 < 1$.

Suppose now that $\pi_0 = 1$ and $\pi_1 < 1$. From the above argument, the term on the right hand side of (\ref{eq:pfnp1_1}) associated with the point $(0, y^*)$ reduces to $(1 - \theta)^{y^*}$, so that
\[\sum_{(x, y) \in \mathcal{B} \backslash \{(0,y^*)\}} f(x, y)K(x, y) \theta^x (1 - \theta)^y = q - (1 - \theta)^{y^*}\ \text{for all}\ \theta.\] Allowing $q \to 0$, which is equivalent to $\theta \to \pi_1$, we have
\[\sum_{(x, y) \in \mathcal{B} \backslash \{(0,y^*)\}} f(x, y)K(x, y) \theta^x (1 - \theta)^y \to -(1 - \pi_1)^{y^*} < 0,\] which implies $f(x,y) < 0$ for at least one point.

\subsection{Proof of Corollary \ref{cor:np1}}
From (\ref{eq:pfnp1_1}) in the proof of Theorem \ref{thm:np1}, we see that any sampling plan yielding an unbiased estimator must have exactly one point on the $y$ axis among its boundary points. If $Y\sim IMN_1(c, 1 - \btheta)$ is the number of negatives until $c$ positives are observed, however, then sampling stops if and only if a point on the line $x=c$ is reached. This implies that there is no stopping point along the $y$ axis for the random variable $Y$, hence no unbiased estimator can exist.

\subsection{Proof of Result \ref{res:ub2}}
As in Result \ref{res:ub1}, to apply Theorem \ref{mnub} we require the following lemma giving derivatives of the function $g(\btheta)$.

\begin{lemma}
	\label{lm:ub2}
	Let $\displaystyle{g(\boldsymbol{\theta}) = \frac{h\left(\boldsymbol{\theta}\right)}{\theta_{00}^c}}$. Then, for non-negative integers $z_{10}, z_{01}, z_{11}$ and $\btheta \in \boldsymbol{\Psi}_Z$,
	\begin{description}
		\item[(i)]$\displaystyle{\frac{\partial^{z_{10} + z_{01} + z_{11}} g_{00}(\btheta)}{\partial \theta_{10}^{z_{10}} \partial \theta_{01}^{z_{01}} \partial \theta_{11}^{z_{11}}} = \prod_{j=0}^{z_{10} + z_{01} + z_{11}}\frac{(c + j - 1 / k)}{(c + z_{10} + z_{01} + z_{11} - 1 / k)}(1 - \mathbf{1}^\prime \btheta)^{1 / k -c -  z_{10} - z_{01} - z_{11}};}$
		\vspace{4pt}
		\item[(ii)]
		$\begin{aligned}[t]
		\frac{\partial^{z_{10} + z_{01} + z_{11}} g_{10}(\btheta)}{\partial \theta_{10}^{z_{10}} \partial \theta_{01}^{z_{01}} \partial \theta_{11}^{z_{11}}} &= \frac{(c + z_{10} -1)!}{(c - 1)!}(1 - \theta_{01} - \theta_{11})^{1 / k - z_{01} - z_{11}}\\
		&\qquad \times \sum_{j=0}^{z_{01} + z_{11}} \frac{\theta_{10}^j}{(1 - \mathbf{1}^\prime \btheta)^{c + z_{10} + j}} \binom{z_{01} + z_{11}}{j} \frac{(c + z_{10} + j - 1)!}{(c + z_{10} - 1)!}\\
		&\qquad \times \prod_{i=j}^{z_{01} + z_{11}} \frac{(c + z_{10} + i - 1 / k)}{(c + z_{10} + z_{01} + z_{11} - 1 / k)}\\
		&\qquad - \frac{\partial^{z_{10} + z_{01} + z_{11}} g_{00}(\btheta)}{\partial \theta_{10}^{z_{10}} \partial \theta_{01}^{z_{01}} \partial \theta_{11}^{z_{11}}};
		\end{aligned}$
		
		\vspace{4pt}
		\item[(iii)]
		$\begin{aligned}[t]
		\frac{\partial^{z_{10} + z_{01} + z_{11}} g_{01}(\btheta)}{\partial \theta_{10}^{z_{10}} \partial \theta_{01}^{z_{01}} \partial \theta_{11}^{z_{11}}} &= \frac{(c + z_{01} -1)!}{(c - 1)!}(1 - \theta_{10} - \theta_{11})^{1 / k - z_{10} - z_{11}}\\
		&\qquad \times \sum_{j=0}^{z_{10} + z_{11}} \frac{\theta_{01}^j}{(1 - \mathbf{1}^\prime \btheta)^{c + z_{01} + j}} \binom{z_{10} + z_{11}}{j} \frac{(c + z_{01} + j - 1)!}{(c + z_{01} - 1)!}\\
		&\qquad \times \prod_{i=j}^{z_{10} + z_{11}} \frac{(c + z_{01} + i - 1 / k)}{(c + z_{10} + z_{01} + z_{11} - 1 / k)}\\
		&\qquad - \frac{\partial^{z_{10} + z_{01} + z_{11}} g_{00}(\btheta)}{\partial \theta_{10}^{z_{10}} \partial \theta_{01}^{z_{01}} \partial \theta_{11}^{z_{11}}}.
		\end{aligned}$
		
		\end{description}
\end{lemma}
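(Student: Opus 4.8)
The plan is to reduce all three formulas to one- or two-variable differentiation problems, evaluate those by the generalized power rule together with one short induction, and obtain (iii) from (ii) by symmetry. Throughout, note that on $\btheta\in\boldsymbol{\Psi}_Z$ (and on a neighbourhood of $\mathbf{0}$) the quantities $\theta_{00}=1-\mathbf{1}^\prime\btheta$ and $1-\theta_{01}-\theta_{11}$ are strictly positive, so every fractional power below is well defined and the formal term-by-term differentiations are legitimate. For (i), Lemma~\ref{lm:h2} and (\ref{eq:theta2}) give $h_{00}(\btheta)=\theta_{00}^{1/k}=(1-\mathbf{1}^\prime\btheta)^{1/k}$, hence
\[
g_{00}(\btheta)=\frac{h_{00}(\btheta)}{\theta_{00}^{c}}=(1-\mathbf{1}^\prime\btheta)^{1/k-c},
\]
which depends on $\btheta$ only through $s:=\mathbf{1}^\prime\btheta=\theta_{10}+\theta_{01}+\theta_{11}$. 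Therefore any mixed partial of total order $n:=z_{10}+z_{01}+z_{11}$ equals $\frac{d^{\,n}}{ds^{n}}(1-s)^{1/k-c}$; applying $\frac{d^{\,n}}{ds^{n}}(1-s)^{\alpha}=\bigl(\prod_{j=0}^{n-1}(j-\alpha)\bigr)(1-s)^{\alpha-n}$ with $\alpha=1/k-c$ and rewriting $\prod_{j=0}^{n-1}(c+j-1/k)=\prod_{j=0}^{n}\frac{c+j-1/k}{c+n-1/k}$ yields (i).

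For (ii), write (using Lemma~\ref{lm:h2})
\[
g_{10}(\btheta)=\frac{(1-\theta_{01}-\theta_{11})^{1/k}}{(1-\mathbf{1}^\prime\btheta)^{c}}-(1-\mathbf{1}^\prime\btheta)^{1/k-c}=:G(\btheta)-g_{00}(\btheta).
\]
The second summand is exactly the trailing $g_{00}$-derivative term appearing in the statement of (ii), so it remains to differentiate $G$. Now $G$ depends on $\btheta$ only through $u:=\theta_{10}$ and $v:=\theta_{01}+\theta_{11}$, so the required mixed partial is $\partial_u^{z_{10}}\partial_v^{z_{01}+z_{11}}G$. Differentiating $z_{10}$ times in $u$ replaces $(1-u-v)^{-c}$ by $\frac{(c+z_{10}-1)!}{(c-1)!}(1-u-v)^{-(c+z_{10})}$ and leaves $(1-v)^{1/k}$ untouched; writing $c':=c+z_{10}$ and $m:=z_{01}+z_{11}$, the whole problem reduces to evaluating $\partial_v^{\,m}\bigl[(1-v)^{1/k}(1-u-v)^{-c'}\bigr]$.

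The core step is an induction on $m$: I would show that $\partial_v^{\,m}\bigl[(1-v)^{1/k}(1-u-v)^{-c'}\bigr]$ equals $(1-v)^{1/k-m}$ times a finite sum $\sum_{j=0}^{m}A_{m,j}\,u^{j}(1-u-v)^{-(c'+j)}$ whose coefficients $A_{m,j}$ are the quantities displayed in the statement; multiplying by $\frac{(c+z_{10}-1)!}{(c-1)!}$ and subtracting the $g_{00}$-derivative then gives (ii). The base case $m=0$ is immediate. For the inductive step, apply $\partial_v$ once more to the right-hand side: differentiating $(1-u-v)^{-(c'+j)}$ leaves an extra factor $(1-v)$, which is absorbed via the identity $1-v=(1-u-v)+u$, at the cost of a shift $j\mapsto j+1$ in one of the sums; equating coefficients of $u^{j}(1-u-v)^{-(c'+j)}$ produces a three-term recurrence relating $A_{m+1,j}$ to $A_{m,j}$ and $A_{m,j-1}$, and a direct check (typically involving the Pascal identity $\binom{m}{j}+\binom{m}{j-1}=\binom{m+1}{j}$ together with elementary Pochhammer/factorial manipulations, with the boundary cases $j=0$, $j=m+1$ handled separately) confirms that the closed form satisfies it. Part (iii) requires no new argument: the exchange $\theta_{10}\leftrightarrow\theta_{01}$, $z_{10}\leftrightarrow z_{01}$ carries $h_{10}$ to $h_{01}$ (it fixes $h_{00}$, $\mathbf{1}^\prime\btheta$, and the total differentiation order), so (iii) is (ii) under this relabelling.

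The main obstacle is entirely computational bookkeeping in the induction: identifying the correct closed form for the coefficients $A_{m,j}$ and verifying the recurrence while keeping the index shift $j\mapsto j+1$ and the boundary terms ($j=0$ and $j=m+1$) consistent when matching coefficients. The conceptual ingredients — collapsing $g_{00}$ and $G$ to functions of one and two variables, the generalized power rule, and the rewrite $1-v=(1-u-v)+u$ — are routine, and the induction is a direct analogue, one dimension higher, of the one-variable computation underlying Result~\ref{res:ub1}.
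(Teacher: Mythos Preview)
Your proposal is correct and follows essentially the same route as the paper's own proof: reduce $g_{00}$ to a one-variable power and $G$ to a two-variable function of $(u,v)=(\theta_{10},\theta_{01}+\theta_{11})$, differentiate first in $u$ (giving the factorial prefactor), then induct on $m=z_{01}+z_{11}$ using exactly the rewrite $1-v=(1-u-v)+u$ and Pascal's identity to match coefficients, with (iii) obtained by the $\theta_{10}\leftrightarrow\theta_{01}$ symmetry. The only differences are notational---the paper writes $(\theta_{10},\gamma)$ for your $(u,v)$ and spells out the coefficient matching explicitly rather than phrasing it as a recurrence for $A_{m,j}$.
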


\begin{proof}
	Let $\xi = 1 / k$. For derivatives of $g_{00}(\btheta) = (1 - \mathbf{1}^\prime \btheta)^{\xi - c}$, we can use the fact that the function is symmetric in $\theta_{10}, \theta_{01},$ and $\theta_{11}$ and that the partial derivatives can be computed in any order to show the first part iteratively. This is done identically as in the proof of Corollary \ref{cor:ub1}.
	
	For $\displaystyle{g_{10}(\btheta) = \frac{(1 - \theta_{01} - \theta_{10})^{\xi}}{\theta_{00}^{c}} - g_{00}(\btheta)}$, we need only find the partial derivative of the first term. Note that this term is symmetric in $\theta_{01}$ and $\theta_{11}$ so that the problem is equivalent to finding $\displaystyle{\frac{\partial^{z_{10} + r} b(\theta_{10}, \gamma)}{\partial \theta_{10}^{z_{10}} \partial \gamma^r}}$, where $\displaystyle{b(\theta_{10}, \gamma) = \frac{(1 - \gamma)^{\xi}}{(1 - \theta_{10} - \gamma)^{c}}}$, which we do by induction.
	
	For the base case, note that for $(z_{10}, \gamma) = (0, 0)$ the result is straightforward. Assume then the result for $(z_{10}, \gamma) = (z, 0)$, so that \[\frac{\partial^z b(\theta_{10}, \gamma)}{\partial \theta_{10}^z} = \frac{(c + z - 1)!}{(c - 1)!}\frac{(1 - \gamma)^{\xi}}{(1 - \theta_{10} - \gamma)^{c + z}}.\] Differentiating with respect to $\theta_{10}$ yields
	\[\frac{\partial^{z + 1} b(\theta_{10}, \gamma)}{\partial \theta_{10}^{z + 1}} = \frac{(c + z  + 1 - 1)!}{(c - 1)!}\frac{(1 - \gamma)^{\xi}}{(1 - \theta_{10} - \gamma)^{c + z + 1}}, \] so the result holds here as well.
	
	Assume now that the result holds for $(z_{10}, \gamma) = (z_{10}, r)$ so that
		\begin{align*}
		\frac{\partial^{z_{10} + r} b(\theta_{10}, \gamma)}{\partial \theta_{10}^{z_{10}}\partial \gamma^r} &=  \frac{(c + z_{10} -1)!}{(c - 1)!}(1 - \gamma)^{\xi - r}\\ &\quad \times \sum_{j=0}^{r} \frac{\theta_{10}^j}{(1 -\theta_{10} - \gamma )^{c + z_{10} + j}} \binom{r}{j} \frac{(c + z_{10} + j - 1)!}{(c + z_{10} - 1)!}\prod_{i=j}^{r} \frac{(c + z_{10} + i - \xi)}{(c +  z_{10} + r - \xi)}.\
	\end{align*}
	
	Differentiating with respect to $\gamma$ yields
		\begin{align*}
		\frac{\partial^{z_{10} + r + 1} b(\theta_{10}, \gamma)}{\partial \theta_{10}^{z_{10}}\partial \gamma^{r + 1}} &= \frac{(c + z_{10} -1)!}{(c - 1)!}\sum_{j=0}^{r} \theta_{10}^j\binom{r}{j} \frac{(c + z_{10} + j - 1)!}{(c + z_{10} - 1)!}\\
		&\qquad \times \prod_{i=j}^{r} \frac{(c + z_{10} + i - \xi)}{(c +  z_{10} + r - \xi)}\frac{\partial}{\partial \gamma}\left(\frac{(1 - \gamma)^{\xi - r}}{(1 -\theta_{10} - \gamma )^{c + z_{10} + j}}\right),
		\end{align*} with
		\begin{align*}
		\frac{\partial}{\partial \gamma}\left(\frac{(1 - \gamma)^{\xi - r}}{(1 -\theta_{10} - \gamma )^{c + z_{10} + j}}\right) &= \frac{(c + z_{10} + j + r - \xi)(1 - \gamma)^{\xi - r - 1}}{(1 - \theta_{10} - \gamma)^{c + z_{10} + j}}\\
		&\qquad+ \frac{\theta_{10}(c + z_{10} + j)(1 - \gamma)^{\xi - r - 1}}{(1 - \theta_{10} - \gamma)^{c + z_{10} + j + 1}}.
		\end{align*}
	For the coefficient of the term $\displaystyle{\frac{(1 - \gamma)^{\xi - r - 1}}{(1 - \theta_{10} - \gamma)^{c + z_{10}}}}$ this yields \[\prod_{i=0}^r \frac{(c + z_{10} + i - \xi)}{(c + z_{10} + r - \xi)}(c + z_{10} + r - \xi) = \prod_{i=0}^{r + 1} \frac{(c + z_{10} + i - \xi)}{(c + z_{10} + r + 1- \xi)}.\]
	Likewise, for the coefficients of the terms $\frac{(1 - \gamma)^{\xi - r - 1}}{(1 - \theta_{10} - \gamma)^{c + z_{10} + j}},\ j = 1, 2, \ldots, r,$ this yields

	\begin{align*}
		&\theta_{10}^j \binom{r}{j} \frac{(c + z_{10} + j - 1)!}{(c + z_{10} - 1)!}\prod_{i=j}^{r} \frac{(c + z_{10} + i - \xi)}{(c +  z_{10} + r - \xi)} \times (c + z_{10} + j + r - \xi)\\ &\qquad + \theta_{10}^{j-1} \binom{r}{j - 1} \frac{(c + z_{10} + j - 1 - 1)!}{(c + z_{10} - 1)!}\prod_{i=j -1}^{r} \frac{(c + z_{10} + i - \xi)}{(c +  z_{10} + r - \xi)} \times \theta_{10}(c + z_{10} + j - 1),
	\end{align*} which simplifies to
	\begin{align*}
		&\theta_{10}^j \binom{r + 1}{j} \frac{(c + z_{10} + j - 1)!}{(c + z_{10} - 1)!}\prod_{i=j}^{r} \frac{(c + z_{10} + i - \xi)}{(c +  z_{10} + r - \xi)}\\
		&\qquad \times\left[ \frac{(r + j - 1)(c + z_{10} + j + r - \xi)}{r + 1} + \frac{j(c + z_{10} + j - 1 - \xi)}{r + 1}\right]
	\end{align*} and finally to
	\[\theta_{10}^j \binom{r + 1}{j} \frac{(c + z_{10} + j - 1)!}{(c + z_{10} - 1)!}\prod_{i=j}^{r + 1} \frac{(c + z_{10} + i - \xi)}{(c +  z_{10} + r + 1- \xi)}.\]
	Finally, we have a coefficient for the term $\displaystyle{\frac{(1 - \gamma)^{\xi - r - 1}}{(1 - \theta_{10} - \gamma)^{c + z_{10} + r + 1}}}$
	\[\theta_{10}^r \frac{(c + z_{10} + r - 1)!}{(c + z_{10} - 1)!} \times \theta_{10} (c + z_{10} + r) = \theta_{10}^{r + 1} \frac{(c + z_{10} + r + 1 - 1)!}{(c + z_{10} - 1)!}.\] Combining yields the desired result.
	
	Since $b(\theta_{10}, \gamma)$ is a smooth function on the indicated space, the order of derivatives is immaterial and so the above completes the proof by induction. The result for $g_{01}(\btheta)$ is identical and omitted.
\end{proof}

Now, to find an unbiased estimator of $p_{00} = h_{00}(\btheta)$, we evaluate the derivative from Lemma \ref{lm:ub2} at $\btheta = \mathbf{0}$ and use Theorem \ref{mnub} to get
\begin{align*}
\hat{p}_{00} &= \frac{(c-1)!}{(c + z_{10} + z_{01} + z_{11})!} \prod_{j=0}^{z_{10} + z_{01} + z_{11}} \frac{(c + j - \xi)}{(c + z_{10} + z_{01} + z_{11} - \xi)} \\
&= \frac{(c + z_{10} + z_{01} + z_{11})}{(c + z_{10} + z_{01} + z_{11} - \xi)} \prod_{j=0}^{z_{10} + z_{01} + z_{11}} \frac{(c + j - \xi)}{(c + j)} \\
&=\frac{1}{\left(1 - \frac{\xi}{(c + z_{10} + z_{01} + z_{11})}\right)}\prod_{j=0}^{z_{10} + z_{01} + z_{11}}\left(1 - \frac{\xi}{(c + j)}\right).
\end{align*}

The proofs for $\hat{p}_{10}$ and $\hat{p}_{01}$ are nearly identical and are omitted here.

\subsection{Proof of Theorem \ref{thm:np2}}
Let $\mathbf{h}^*(\btheta)$ be the first three components of $\mathbf{h}$ and $\widehat{\mathbf{h}^*(\btheta)}$ an unbiased estimator under $\mathcal{S}$. Now, there exist values of $\btheta \in \boldsymbol{\Psi}_{\btheta}$ such that $\mathbf{1}^\prime \mathbf{h}^*(\btheta) > 1$ (for example, $\btheta = (.45, .45, .05)$). However, $\mathbf{h}^*(\btheta)$ is analytic on all values $\boldsymbol{\Psi}_{\btheta}$, and so, by the uniqueness of the multivariate Taylor expansion, $\mathrm{E}_\mathcal{S}(\widehat{\mathbf{h}^*(\btheta)}) = \mathbf{h}^*(\btheta)$ for all $\btheta \in \boldsymbol{\Psi}_{\btheta}$, even those outside of $\boldsymbol{\Psi}_Z$. Then, if $\widehat{\mathbf{h}^*(\btheta)}$ were a proper estimator under $\mathcal{S}$, we would have $\mathrm{E}_{\mathcal{S}}(\mathbf{1}^\prime \widehat{\mathbf{h}^*(\btheta)}) \leq 1$ for all $\btheta$, so that $\widehat{\mathbf{h}^*(\btheta)}$ is not unbiased.

\subsection{Proof of Theorem \ref{thm:ident}}
Since the Multinomial is a full rank exponential family (for four class data in three dimensions), it is sufficient to
show that there exists a one-to-one mapping from $\mathbf{p}$ to
$\boldsymbol{\eta}$. Since, From Lemma \ref{lm:h2}, $\btheta$ is a one-to-one function of
$\p$, we need only show that such a mapping exists from $\btheta$ to
$\bEta$. However, from (\ref{eq:eta1}) the correspondence is one-to-one if and only if $\boldsymbol{\Phi}$ is non-singular, which is equivalent to the given condition.

\subsection{Proof of Corollary \ref{cor:ident}}
Let $\nu_1 = \pi_0^{(1)} + \pi_1^{(1)} - 1$ and $\nu_2 = \pi_0^{(2)} + \pi_1^{(2)} - 1$. Then, $\phi$ from Theorem \ref{thm:ident} reduces to
\[\arraycolsep=5pt \left|\begin{array}{ccc} \nu_1 \pi_{0}^{(2)} & -\nu_2 (1 - \pi_{0}^{(1)}) & \nu_1\pi_{0}^{(2)} - \nu_2\pi_1^{(1)} \\
-\nu_1 (1 - \pi_{0}^{(2)}) & \nu_2 \pi_{0}^{(1)} & \nu_2\pi_{0}^{(1)} - \nu_1\pi_1^{(2)} \\
\nu_1 (1 - \pi_{0}^{(2)}) & \nu_2 (1 - \pi_{0}^{(1)}) & \nu_1(1 - \pi_{0}^{(2)}) + \nu_2\pi_1^{(1)}\end{array}\right| = (\nu_1 \nu_2)^2.\] Then, $\phi \neq 0$ if and only if both $\nu_1 \neq 0$ and $\nu_2 \neq 0$.

\subsection{Proof of Theorem \ref{thm:np2mc}}
From (\ref{eq:eta1}) we have $\btheta = g(\bEta)= \boldsymbol{\Phi}^{-1} \bEta + \boldsymbol{\Phi}^{-1} \boldsymbol{\pi}_{00}$, so that $\btheta$ can be achieved by a full rank affine transformation of $\bEta$. But, this implies that the function $\mathbf{p} = h(g(\bEta))$ is analytic at the same points $\btheta \notin \boldsymbol{\Psi}_Z$ as noted in the proof of Theorem \ref{thm:np2}. The proof then follows exactly as in the previous theorem.

\section*{Acknowledgements}
The authors would like to thank the Associate Editor and two anonymous referees whose input greatly improved the presentation of this paper.




\end{document}